	\newcommand{\blind}{0}
    \renewcommand\section{\@startsection {section}{1}{\z@}%
                                       {-3.5ex \@plus -1ex \@minus -.2ex}%
                                       {2.3ex \@plus.2ex}%
                                       {\normalfont\fontfamily{phv}\fontsize{16}{19}\bfseries}}
    \renewcommand\subsection{\@startsection{subsection}{2}{\z@}%
                                         {-3.25ex\@plus -1ex \@minus -.2ex}%
                                         {1.5ex \@plus .2ex}%
                                         {\normalfont\itshape\fontfamily{phv}\fontsize{14}{17}\bfseries}}
    \renewcommand\subsubsection{\@startsection{subsubsection}{3}{\z@}%
                                        {-3.25ex\@plus -1ex \@minus -.2ex}%
                                         {1.5ex \@plus .2ex}%
                                         {\normalfont\itshape\normalsize\fontfamily{phv}\fontsize{14}{17}\selectfont}}
            \newtheorem{theorem}{Theorem}
            \newtheorem{proposition}{Proposition}
            \newtheorem{definition}{Definition}
            \tikzstyle{process} =[rectangle, node distance=0.8cm, text width=21em, minimum height=2em, text centered, rounded corners, draw=black, fill=white!15]
            \tikzstyle{decision}=[rectangle, node distance=0.8cm, text width=21em, minimum height=2em, text centered, rounded corners, draw=black, fill=white!20]
            \tikzstyle{algodecision}=[diamond, node distance=0.8cm, text width=21em, minimum height=2em, text centered, rounded corners, draw=black, fill=gray!20]
            \tikzstyle{arrow}=[thick, ->, >=Stealth]
            \newcolumntype{C}[1]{>{\centering\arraybackslash}m{#1}}
            \newtcolorbox{myblock}[1]{
            	colback=blue!5!white,
            	colframe=blue!75!black,
            	fonttitle=\bfseries,
            	title=#1
            }
\begin{document}
		\def\spacingset#1{\renewcommand{\baselinestretch}%
			{#1}\small\normalsize} \spacingset{1}
		
		\if0\blind
		{
			\title{\bf Multi-objective Bayesian optimization for blocking in extreme value analysis and its application in additive manufacturing}
    
		\author{\textbf{Shehzaib Irfan$^a$, Nabeel Ahmad$^{b,c}$, Alexander                           Vinel$^a$, Daniel F. Silva$^a$,} \\
            \textbf{Shuai Shao$^{b,c}$, Nima Shamsaei$^{b,c}$, Jia Liu$^{d,\footnote{Corresponding author (jliu@ise.ufl.edu)}}$}  \\
			$^a$ Department of Industrial and Systems Engineering, Auburn University, USA,\\
            $^b$ Department of Mechanical Engineering, Auburn University, USA,\\
            $^c$ National Center of Additive Manufacturing Excellence (NCAME), Auburn University, USA,\\
            $^d$ Department of Industrial and Systems Engineering, University of Florida, USA.}
			\date{}
			\maketitle
		} \fi
		
		\if1\blind
		{

            \title{\bf \emph{Multi-objective Bayesian optimization for blocking in extreme value analysis and its application in additive manufacturing}}
			\author{Author information is purposely removed for double-blind review}
			
                \bigskip
			\bigskip
			\bigskip
			\begin{center}
				{\LARGE\bf Multi-objective Bayesian optimization for blocking in extreme value analysis and its application in additive manufacturing}
			\end{center}
			\medskip
		} \fi

	\begin{abstract}
Extreme value theory (EVT) is well suited to model extreme events, such as floods, heatwaves, or mechanical failures, which is required for reliability assessment of systems across multiple domains for risk management and loss prevention. The block maxima (BM) method, a particular approach within EVT, starts by dividing the historical observations into blocks. Then the sample of the maxima for each block can be shown, under some assumptions, to converge to a known class of distributions, which can then be used for analysis. The question of automatic (i.e., without explicit expert input) selection of the block size remains an open challenge. This work proposes a novel Bayesian framework, namely, multi-objective Bayesian optimization (MOBO-$\mathcal{D}^*$), to optimize BM blocking for accurate modeling and prediction of extremes in EVT. MOBO-$\mathcal{D}^*$ formulates two objectives: goodness-of-fit of the distribution of extreme events and the accurate prediction of extreme events to construct an estimated Pareto front for optimal blocking choices. The efficacy of the proposed framework is illustrated by applying it to a real-world case study from the domain of additive manufacturing as well as a synthetic dataset. MOBO-$\mathcal{D}^*$ outperforms a number of benchmarks and can be naturally extended to high-dimensional cases. The computational experiments show that it can be a promising approach in applications that require repeated automated block size selection, such as optimization or analysis of many datasets at once.
    \end{abstract}
			
	\noindent%
	{\it Keywords:}  Extreme value theory; Gumbel distribution; Multi-objective Bayesian optimization; Pareto front; Additive manufacturing.

	\spacingset{1.5} 
\sloppy
\section{Introduction}\label{sec:intro}
Extreme events pose great challenges to the reliability of different engineering systems, such as flash flooding, sudden wind gust affecting wind turbines or airplane wings, extreme corrosion pits rendering metallic parts unusable, or deep irregular notches on the surface of a component causing premature fatigue crack initiation. Such extreme events, despite their rarity, can be disastrous, resulting in substantial losses. With the increasing frequency of such events \citep{shen2016impact, najibi2018recent}, it is important to model and predict them to mitigate the associated risk. Extreme value theory (EVT) is a branch of statistics that is well suited to model extreme events by estimating the magnitude and probability of future outcomes based on observed history \citep{haan2006extreme}. For instance, EVT has been effectively used in engineering design to model rare but plausible extreme loads, thus giving rise to the need for robust and reliable systems \citep{clifton2008bayesian, li2014extreme, zhao2022performance, haviland1964engineering}. The application of EVT is not limited to engineering systems, but can be extended to other application areas, such as modeling of weather events \citep{engeland2004practical, coles1998extreme, shen2016impact}.

EVT relies primarily on two sampling techniques, i.e., block maxima (BM) \citep{gumbel1958statistics} and peaks-over-threshold (POT) \citep{pickands1975statistical}, to systematically select large values in the domain of the possible occurrence of extreme events to estimate their distribution. BM divides the domain under observation into equal-sized and non-overlapping blocks and picks the maximal value. POT sets a (relatively large) threshold and considers only outcomes exceeding it. Neither technique is perfect \citep{bucher2021horse}: picking only the maximum value from each predefined block, BM may ignore other large outcomes in the same block, while POT may pick all large values from only a small region in the domain, biasing the analysis. Preference for BM or POT depends primarily on the context of the application and the type of observed data.

Traditionally, BM uses application-specific inherent blocking. One popular approach is calendar blocking (e.g., daily, weekly, etc.) for applications such as finance \citep{bucher2018inference}, hydrology \citep{asadi2018optimal}, rainfall \citep{gurung2021modelling}. However, owing to the recent data explosion in many modern data-intensive applications such as high-resolution manufacturing scans \citep{ahmad2025determining, nikfar2025extreme}, these applications need extreme event analysis but do not have natural blocking structures for their datasets. Therefore, there has been a renewed interest in the question of \textit{how should the block size be selected in the block maxima (BM) analysis} \citep{zou2021multiple}. In the existing literature, there is a lack of a systematic way to answer this question, particularly, if we focus on approaches to optimize the blocking in BM automatically (i.e., without a human expert) \citep{bucher2021horse}. This research seeks to answer this important question by formulating the underlying problem as a generalizable multi-objective optimization problem to find the optimal blocking (i.e., block dimensions) that balances the accuracy of predicting extreme events and the goodness-of-fit of the distribution of extreme events in EVT, which can be confidently used in different applications, therefore, making the reliability analysis more robust.

Recently, some research efforts have considered methodologies for block size selection that could be used to manually select different block sizes and assess their performance based on different metrics. For example, \cite{wang2018} proposed a methodology that uses an entropy-based indicator that makes use of different goodness-of-fit criteria, including Kolmogorov-Smirnov and Chi-square, to select the block size with the smallest indicator value. This method focuses on the goodness-of-fit of the distribution of the extreme values but neglects the accuracy of using the fitted distribution to estimate the extreme values in EVT. Further, they select the block size options beforehand and then perform the selection from that list only, therefore making this procedure somewhat arbitrary and less suitable high-dimensional datasets. 
\cite{ozari2019} proposed a parametric approach to manually select the block size based on the goodness-of-fit of the data and the error in the predicted quantile. The manual nature of the proposed approach makes it very tedious and sometimes impractical for application areas where the selection range of the block size may be large.
\cite{Dkengne2020} presented an algorithm that can be used to select the appropriate block size for generalized extreme value (GEV) distribution. This algorithm consists of three stages. First, a list of pre-selected block sizes is used to fit the GEV distribution. Second, the smallest size that simultaneously stabilizes the distribution parameters is picked. Third, the block size with the largest shape parameter, significantly similar to the shape parameter of the stabilizing block size, is selected. However, the choice of candidate sizes is again somewhat arbitrary, and much of the process requires expert feedback and therefore is difficult to automate. Finally, the search is essentially an enumerative procedure, which could be  intractable if the search space is large.

Another related work, \cite{bucher2018inference}, considers overlapping blocks. The authors propose the idea of a sliding block of constant size over the domain and recording the maxima from that as a stationary time series. While this violates the independently and identically distributed (i.i.d.) assumption on the maxima, it remains stationary, and the approach has been shown to significantly reduce the asymptotic variance for the maximum likelihood estimates (MLE) of the parameters. Authors have also shown that MLE estimates become stable as a function of block size if sliding blocks are used. See also \cite{zou2021multiple} for another approach to this idea. Note, though, that the question of optimal block size selection remains valid for this setting as well. 

In this work, we formulate the block size selection problem as a bi-objective optimization problem and propose a Bayesian framework to solve it by constructing a set of non-dominated solutions, which is an approximation of the actual Pareto front. The two objectives are the accuracy of prediction and the goodness of the fit of data to the theoretically suitable distribution of the extremes, Gumbel in this case. Naturally, the former is more directly useful, since it is usually the goal of the overall analysis. 
The rationale behind using goodness of fit as a second objective is to prevent overfitting the training data if the prediction accuracy is used as the only objective. In this sense, it can be seen as a regularization technique, making sure that the accuracy is achieved by capturing a real trend in the data (close to the true distribution of extremes) rather than random noise. 
The case study in Section \ref{sec:case studies} supports this decision and highlights a situation where the block size obtained from optimizing solely for prediction accuracy performed worse on the testing dataset, compared to other non-dominated solutions identified. Furthermore, including goodness of fit aligns the proposed methodology with the literature, where that is often the preferred objective function \citep{wang2018,ozari2019}. 

The proposed framework provides a general systematic way for selecting block size in extreme value analysis based on BM. It advances current practices of inherent blocking or manual selection to a systematic algorithm to explore the optimal solutions in the feasible region, even in high dimensions. We demonstrate the performance of the framework using a real-world case study and a simulated dataset. We show that our proposed framework offers a very good balance between the quality of the solution and the computational cost against the benchmark approaches. Furthermore, the framework's computational costs can be expected to scale well with data dimensionality, and it, by design, does not require expert input.   As a result, it is well-suited for applications that involve a repeated analysis of many datasets, for example, optimization with EVT estimate as an objective or a constraint.

The rest of the paper is structured as follows. The proposed multi-objective Bayesian optimization framework is detailed in Section \ref{sec:method}. Its efficacy is demonstrated and validated using a real case study from manufacturing and a simulated case study in a high-dimensional domain in Sections \ref{sec:case studies} and \ref{sec:simulation}. The computational burden of the proposed method is studied in Section  \ref{sec:time_comparison}. Finally, conclusions and future work are described in Section \ref{sec:conclusion}. 

\section{Proposed methodology}\label{sec:method}
The proposed multi-objective Bayesian optimization (MOBO-$\mathcal{D}^*$) framework harnesses the power and flexibility of Bayesian statistics and Bayesian optimization to find the optimal block size in BM for EVT. Using Bayesian optimization, it assigns a random process (Gaussian process) as a surrogate model for the objective functions and then uses the 
maximization of an acquisition function to explore the feasible region to approximate the Pareto front in an efficient manner. Figure \ref{fig:flowchart} gives an overall summary of the proposed framework. In an extreme value analysis, raw data are acquired from the domain of interest, and the procedure of optimizing block size in BM for EVT include:  

\begin{figure}[H]
	\centering
\includegraphics[height=0.55\linewidth]{./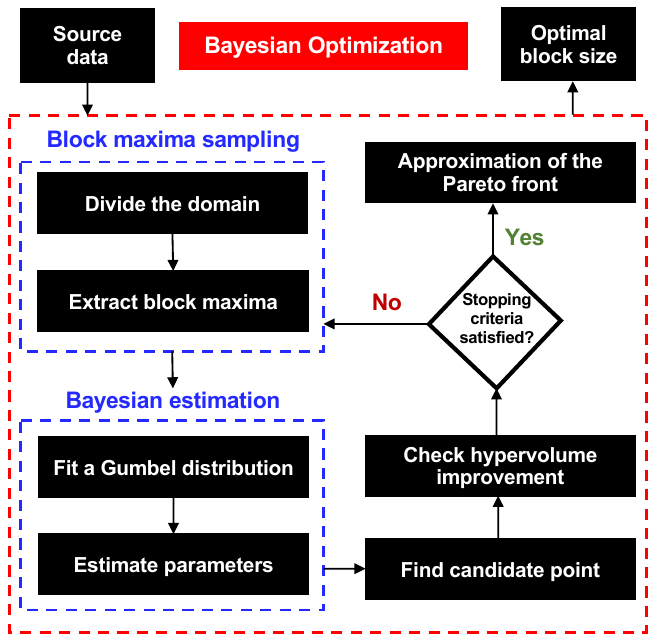}
	\caption{Flowchart describing the proposed methodology to select the optimal block size.}
	\label{fig:flowchart}
\end{figure}

\begin{enumerate}
  \item Block maxima sampling: divide the domain of interest with some block sizes for initialization and extract block maxima values.
  
  \item Bayesian estimation: estimate the parameters of the fitted generalized extreme value (GEV) distribution (here Gumbel distribution) from the block maxima values. 
  
  \item Bayesian optimization: formulate a bi-objective optimization problem to find the optimal block size to satisfy the goodness of the fit of data to the estimated distribution and the accurate prediction of extreme values. The potential optimal solutions are obtained by using hyper-volume improvement in an approximated Pareto front. 
\end{enumerate}

The process is repeated until the stopping criteria are satisfied, such as the convergence of results or the exhaustion of the number of iterations. The approximation of the Pareto front is obtained based on the optimization results once the iterations are complete. The optimal block size is selected based on the decision maker's preference among the non-dominated solutions in the approximated Pareto front.  

\subsection{Background: extreme value theory}\label{sec:prelim_results}
This subsection summarizes the theoretical foundation of the block maxima method.
The following result due to \citet{fisher1928limiting} forms the basis of the approach. Let a random variable $M_k =\text{max}\{Y_1, Y_2, \dots , Y_k\}$, where $Y_1, Y_2, \dots, Y_k$ is a sample of random variable $Y$  with finite variance.

\begin{theorem} 
	If there exists a sequence of real numbers  $a_k>0$ and $b_k \in \mathbb{R}$ such that the following limit converges to a non-degenerate distribution function:
	\begin{align} \label{eq:FT_thm}
		\lim_{k\rightarrow \infty} \mathbb{P} \left\{\frac{M_k - b_k}{a_k} \leq x \right\} = G(x),
	\end{align}
	then $G(x)$ belongs to the Generalized Extreme Value (GEV) distribution family, given as
	\begin{align} \label{eq:GEV}
		G(x) = \exp \left\{- \left[1 + \gamma \left\{\frac{x -\mu}{\sigma}\right\}\right]^{-\frac{1}{\gamma}}\right\},
	\end{align}
	where $\left\{x \in \mathbb{R} : 1 + \gamma\left(\frac{x-\mu}{\sigma}\right) > 0 \right\}$, $\gamma, \mu \in \mathbb{R}, \: \sigma >0$. Parameter $\mu$ is the location parameter, $\sigma$ is the scale, and $\gamma$ is the shape.
\end{theorem}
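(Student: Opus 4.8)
The plan is to argue in two stages. First I would show that any non-degenerate limit $G$ appearing in \eqref{eq:FT_thm} is necessarily \emph{max-stable}: for every integer $n\ge 1$ there are constants $\alpha_n>0$ and $\beta_n\in\mathbb{R}$ with $G^n(\alpha_n x+\beta_n)=G(x)$. Second, I would characterize all max-stable $G$ and show that this class is exactly the GEV family \eqref{eq:GEV}.

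For the first stage, fix $n$ and partition $Y_1,\dots,Y_{nk}$ into $n$ consecutive blocks of length $k$. Since $M_{nk}$ is the maximum of the $n$ i.i.d.\ block-maxima, independence gives
\begin{align*}
	\mathbb{P}\left\{\frac{M_{nk}-b_k}{a_k}\le x\right\}=\left(\mathbb{P}\left\{\frac{M_k-b_k}{a_k}\le x\right\}\right)^{n}\xrightarrow[k\to\infty]{}G(x)^{n}.
\end{align*}
At the same time, \eqref{eq:FT_thm} applied along the subsequence $nk$ gives $(M_{nk}-b_{nk})/a_{nk}\Rightarrow G$. Thus the distribution of $M_{nk}$ converges, under two different affine renormalizations, to the two non-degenerate laws $G^n$ and $G$; the convergence-of-types theorem (Khinchin's lemma) then produces $\alpha_n>0$, $\beta_n\in\mathbb{R}$ with $G^n(\alpha_n x+\beta_n)=G(x)$, which is exactly max-stability.

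For the second stage I would pass to $\psi(x):=-\log G(x)$ on the set $\{0<G(x)<1\}$, turning max-stability into $n\,\psi(\alpha_n x+\beta_n)=\psi(x)$. Iterating this identity in $n$ forces the multiplicative law $\alpha_{mn}=\alpha_m\alpha_n$ together with the compatible rule $\beta_{mn}=\alpha_m\beta_n+\beta_m$. Because $G$, hence $\psi$ and its inverse, is monotone, these relations extend from integers to rationals and then to all reals, so $t\mapsto\alpha_t$ is a monotone multiplicative map and therefore $\alpha_t=t^{-\gamma}$ for a single $\gamma\in\mathbb{R}$. Substituting this back into the functional equation and solving the resulting Cauchy-type equation (after a location/scale normalization) yields $\psi(x)=(1+\gamma x)^{-1/\gamma}$ when $\gamma\ne 0$ and $\psi(x)=e^{-x}$ when $\gamma=0$; reinstating the free location $\mu$ and scale $\sigma>0$ recovers precisely \eqref{eq:GEV}, with the domain condition $1+\gamma(x-\mu)/\sigma>0$ being exactly the requirement $\psi>0$.

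I expect the regularity step in the second stage to be the main obstacle. The relations for $\alpha_t$ and $\beta_t$ are Hamel/Cauchy functional equations, which in general admit wildly pathological (non-measurable) solutions; the whole argument rests on using monotonicity of the distribution function $G$ to exclude those branches and to pin down $\alpha_t=t^{-\gamma}$, which simultaneously separates the Fréchet $(\gamma>0)$, Weibull $(\gamma<0)$ and Gumbel $(\gamma=0)$ sub-cases inside the single parametrization \eqref{eq:GEV}. A smaller but necessary preliminary is a careful statement and proof of the convergence-of-types theorem invoked in the first stage; an alternative route that avoids the explicit functional-equation bookkeeping is the de Haan approach via the tail-quantile function and regular variation, at the cost of first developing the basics of regularly varying functions.
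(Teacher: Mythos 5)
The paper does not prove this statement at all: it is the classical Fisher--Tippett--Gnedenko extremal types theorem, and the authors simply cite \citet{fisher1928limiting} and use it as the foundation for the block-maxima method. So there is no in-paper argument to compare against; what you have written is the standard textbook proof outline, and it is correct in structure. Stage one (max-stability of any non-degenerate limit via the identity $\mathbb{P}\{(M_{nk}-b_k)/a_k\le x\}=\left(\mathbb{P}\{(M_k-b_k)/a_k\le x\}\right)^n$ together with the convergence-of-types theorem) is exactly right, and it correctly isolates the only probabilistic input; note it uses only that the $Y_i$ are i.i.d., not the finite-variance hypothesis the paper includes in its setup (which is in fact not needed for the theorem). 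Stage two (passing to $\psi=-\log G$, deriving the multiplicative semigroup relations for $\alpha_n,\beta_n$, and solving the resulting Cauchy-type equations) is the classical Gnedenko/de Haan characterization of max-stable laws. Your own diagnosis of where the work lies is accurate: the extension of $\alpha_t=t^{-\gamma}$ from integers to reals is legitimate only because monotonicity of $G$ rules out pathological Hamel solutions, and the convergence-of-types lemma must be stated with the hypothesis that both limit laws are non-degenerate (which holds here since $G$ non-degenerate implies $G^n$ non-degenerate). As a complete proof this would still need those two lemmas written out, but as a proposal it identifies the right decomposition and the right obstacles; for the purposes of this paper, a citation to a standard reference (e.g., de Haan and Ferreira's treatment via the tail-quantile function, which you mention as the alternative route) is what the authors intended.
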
 

The GEV family includes three distributions: Fr\'{e}chet ($\gamma>0$), Weibull ($\gamma < 0$), and Gumbel ($\gamma  = 0$). From (\ref{eq:GEV}) it can be seen that Fr\'{e}chet has a polynomial decay in the tail, Weibull has a finite right endpoint, while Gumbel has an exponential decay. Therefore, the Gumbel distribution becomes important and relevant in applications where extremes are unlikely to be very large, and an upper bound is likely to exist but cannot be known precisely. This distribution has been frequently used to model the extremes in different applications like engineering \citep{niemann2013statistics, melchers2021new}, climatology \citep{kang2015determination}, and finance \citep{purohit2022european}. We use the Gumbel distribution to model the extreme values in our proposed framework. 

Using $\gamma = 0$ and the following limit:
\begin{align} 
	\lim\limits_{\gamma \to 0} \left[ 1 + \gamma \Biggl\{  \left(- \frac{x-\mu}{\sigma}\right)\Biggr\}\right] ^{-\frac{1}{\gamma}} = \exp\Biggl\{ \left(- \frac{x-\mu}{\sigma}\right)\Biggr\},
	\end{align}
in (\ref{eq:GEV}), we can write the cumulative distribution and the density functions of Gumbel distribution as follows:
\begin{align}\label{eq:gumbel_cdf}
	G(x) = \exp \left\{-\exp \left\{-\left(\frac{x-\mu}{\sigma}\right)\right\}\right\};
\end{align} 
\begin{align}\label{eq:gumbel_pdf}
	g(x) = \frac{1}{\sigma} \exp\left\{-\left(\frac{ x-\mu}{\sigma}\right) -  \exp \left\{- \left(\frac{ x-\mu}{\sigma}\right) \right\}       \right\}.
\end{align} 

\subsection{Problem formulation}\label{sec:prob form}
This subsection formulates the problem in terms of two objective functions for the MOBO-$\mathcal{D}^*$ framework. Suppose that we observe a phenomenon in an $n$-dimensional domain, and the ultimate goal is to model and gain insights about the extreme realizations of it with EVT. We divide the domain into blocks and pick the block maxima from each block to fit a distribution of the extreme event in this domain, following the theoretical results described above. 

The question addressed here is the size of the blocks considered. We define the set of block size parameters $\mathcal{D} =\{D_1, D_2, \dots, D_n\} \in \mathbb{Z}^n$ for the domain such that $D_j$ divides the $j^{th}$ dimension into $D_j$ parts, with user-defined upper bound $U_j$ on these block size parameters: $1 \leq D_j \leq U_j < + \: \infty \;, \forall \; j = 1, \dots, n$. Then, a block maxima sample $x=\{x_1, x_2, \dots, x_{m(\mathcal{D})}\}$ is obtained and used to estimate the parameters of the Gumbel distribution $\mathbf{\Theta} = (\mu,\sigma)$. Here, $m(\mathcal{D})$ is the number of block maxima values and depends on the set of block size parameters $\mathcal{D}$, i.e., $m(\mathcal{D}) = \prod_j D_j$. 
The Gumbel distribution can be used to estimate the magnitude of an extreme event $\hat{q}$ with number of blocks equal to $m(\mathcal{D})$ from (\ref{eq:gumbel_cdf}) as 
\begin{align}
	 \mathbb{P}\{X > \hat{q}\} &:= 1 - G (\hat{q}) = \frac{1}{m(\mathcal{D})},   \\
	\Rightarrow  \; \hat{q} &= \hat{\mu} - \hat{\sigma} \log \big[\log(m(\mathcal{D})) -\log(m(\mathcal{D})-1)\big]. \label{eq:return_level}
\end{align}

The prediction error can then be calculated as follows and will be used as the first objective in the optimization:
\begin{align}\label{eq:prediction_accuracy}
	\left|  \frac{q-\hat{q}}{q} \right|,
\end{align}  
where $q$ is the observed most extreme value, i.e., $q = \max_i\{ x_i\} $, and $\hat{q}$ is given above \eqref{eq:return_level}.

The Kolmogorov-Smirnov (KS) test statistic, which quantifies the goodness-of-fit (GoF) of the estimated parameters, given in (\ref{eq:KS_test_statistic}), will be used as the second objective in the optimization. We use the KS statistic due to its non-parametric nature, simplicity, robust detection of distributional differences, and low sensitivity to outliers.
\begin{align}\label{eq:KS_test_statistic}
	\underset{x}{\text{sup}} \left| G_{\bm{\mathcal{D}}}(x) - G_{EDF}(x) \right|,
\end{align}
where $G_{\bm{\mathcal{D}}}(x)$ is the cumulative distribution function (CDF) of the fitted Gumbel distribution and $G_{EDF}(x)$ is the empirical distribution function (EDF) based on the sample $x=\{x_1, x_2, \dots, x_{m(\mathcal{D})}\}$. The smaller the difference between the two, the better the distribution fits the data.

Both (\ref{eq:prediction_accuracy}) and (\ref{eq:KS_test_statistic}) are to be minimized to find the set of optimal block size parameters $\mathcal{D}^{*}$. The goodness of fit is important because it ensures that the proposed methodology is robust to overfitting. For instance, if the fit is bad but the accuracy is high, it may indicate overfitting the data to the model, which could eventually lead to bad out-of-sample performance. It is demonstrated in our case study (Section \ref{sec:case_study_AM_d*_one_param}), where $\mathcal{D}$, which showed the highest accuracy during the training process, performed poorly on the unseen data. Therefore, we use both objectives in our proposed framework. A similar approach has been adopted by \cite{wang2018}, \cite{ozari2019}, and \cite{Dkengne2020}, but in different contexts. 

Our proposed framework uses point estimates for the Gumbel distribution parameters in both objectives and is independent of the choice of the parameter estimation methodology. Commonly used estimators for the point estimation of distribution parameters are maximum likelihood (MLE) and probability weighted moment (PWM) estimators \citep{ferreira2015block}. On the other hand, 
Bayesian inference is independent of regularity assumptions, robust against overfitting, and is capable of characterization of the uncertainty associated with these parameters, which led to interest in using Bayesian inference for this goal \citep{moins2023reparameterization}. Therefore, we have opted to use Bayesian inference for estimating the parameters and give details of the procedure in the \ref{app:parameter_estimation}.

As mentioned in Section \ref{sec:prob form}, the aim here is to minimize two objective functions to find the set of optimal block size parameters $\mathcal{D}^* =\{D_1^*, D_2^*, \dots, D_n^*\}$. The resulting problem is a bi-objective mathematical program in an $n$-dimensional domain and can be written as
\begin{align}\label{eq:optimizatin_problem}
	\mathbf{min}& \hspace{1em} f_1(\mathcal{D}) = \left|  \frac{q-\hat{q}}{q} \right|, \\
	\mathbf{min}& \hspace{1em} f_2(\mathcal{D}) = \underset{x}{\text{sup}} \left| G_{\bm{\mathcal{D}}}(x) - G_{EDF}(x) \right|, \\
        \mathbf{s.t.}& \hspace{1em} 1 \leq D_j \leq U_j \;\;  \;\; j = 1, \dots, n, \\
	& \hspace{1em} D_j \in \mathbb{Z}_+ \hspace{2em} \;\;     j = 1, \dots, n.
\end{align}
Both objectives are nonlinear and non-convex, and in fact, no closed-form expression connecting the decision variables with the objectives is available. Further, estimation of the objectives requires a separate optimization step (fitting of the EVT parameters). As a result, we do not expect that off-the-shelf global optimization solvers will be suitable for this task, and instead propose a custom optimization approach.

\subsection{Bayesian optimization for blocking in EVT}\label{sec:bayesian_optimization}
We propose a Multi-Objective Bayesian Optimization algorithm (MOBO-$\mathcal{D^*}$) to solve the above optimization problem.  Bayesian optimization places a probabilistic \textit{surrogate} model on the objectives individually, with the most popular choice being Gaussian processes \citep{gnanasambandam2024deep} and then a multi-objective \textit{acquisition} function uses that surrogate model to decide which points to evaluate next based on the expected improvement in the optimization direction \citep{roussel2021multiobjective}. 

In MOBO-$\mathcal{D^*}$, we assign Gaussian Processes (\textit{GP}) as surrogate models for both objective functions: $f(\mathcal{D}) \sim GP(\eta(\mathcal{D}), \zeta(\mathcal{D}, \mathcal{D}'))$, where $\eta$ and $\zeta$ are mean and co-variance function respectively, and $\mathcal{D}$ and $\mathcal{D'}$ represent two different sets of values of $\mathcal{D}$. Generally, a constant value (e.g., 0) is assumed as a prior for the mean $\eta$, and radial basis functions, which encode higher correlations between closer variables, is used as prior for co-variance function $\zeta$. To enable MOBO-$\mathcal{D^*}$, we give the following definitions before presenting the steps of our algorithm.

\begin{definition}\label{def:pareto_set}
	For minimization of M objectives, a solution $\mathcal{D}$ \textbf{dominates} another solution $\mathcal{D}'$, i.e. $\mathcal{D} \succ \mathcal{D}'$, if and only if $f_{m}(\mathcal{D}) \leq f_{m}(\mathcal{D}') \; \forall \; m = 1,\dots, M$ and there exists $ m' \in \: \{1,\dots,M\}$ such that $f_{m'}(\mathcal{D}) < f_{m'}(\mathcal{D}')$. Then a \textbf{non-dominated (ND) solution set} $\mathcal{P}^{sol}$ can be defined as $\{ \mathcal{D} \; s.t. \; \nexists \; \mathcal{D}' : \bm{f}(\mathcal{D}') \succ \bm{f}(\mathcal{D})\}$ and a \textbf{non-dominated (ND) set} $\mathcal{P}$ can be defined as $\{\bm{f}(\mathcal{D}) \in \mathbb{R}^M : \mathcal{D} \in \mathcal{P}^{sol}\}$, where $\bm{f}(\mathcal{D})= (f_1(\mathcal{D}), f_2(\mathcal{D}), \dots, f_M(\mathcal{D}))$.
\end{definition}

\begin{definition}\label{def:hypervolume_indicator}
	Given a reference point $\bm{r} \in \mathbb{R}^M$, the hyper-volume (HV) of the ND set $\mathcal{P}$ is the M-dimensional Lebesgue measure $\lambda_M$ of the space dominated by $\mathcal{P}$ and bounded from above by $\bm{r} : HV (\mathcal{P}, \bm{r})= \lambda_M \Bigl(\bigcup_{v=1}^{|\mathcal{P}|}[\bm{r}, \mathcal{P}_v]\Bigr)$, where $\mathcal{P}_v \in \mathcal{P}$ and $[\bm{r}, \mathcal{P}_v]$ is the hyper-rectangle bounded by vertices $\bm{r}$ and $\mathcal{P}_v$.
\end{definition}
Without loss of generality, $\bm{r}$ can also bound the space dominated by $\mathcal{P}$ from below if the objective functions are to be maximized. In our case, we initialize $\bm{r}$ at the origin and propose a heuristic-based approach to find an appropriate place for it in the objective space.
\begin{definition}\label{def:hypervolume_improvement}
	Given the ND set $\mathcal{P}$ and reference point $\bm{r}$, the hyper-volume improvement (HVI) of a candidate point $\mathcal{D}_{cand}$ is $: HVI\bigl(\bm{f}(\mathcal{D}_{cand}), \mathcal{P}, \bm{r}\bigr) = HV\bigl(\mathcal{P} \cup\bm{f}(\mathcal{D}_{cand}), \bm{r}\bigr) - HV(\mathcal{P}, \bm{r})$.
\end{definition}
Figure \ref{fig:illustration_pareto_front} illustrates the reference point $\bm{r}$, which bounds the space dominated by the ND set $\mathcal{P}$ from above. The blue area highlights the hyper-volume ($HV$) bounded by the reference point and ND set. $\mathcal{D}_{cand}$ is the candidate point that gives hyper-volume improvement ($HVI$) highlighted in red.
\begin{figure}[h]
    \centering	
    \includegraphics[height=0.25\linewidth]{./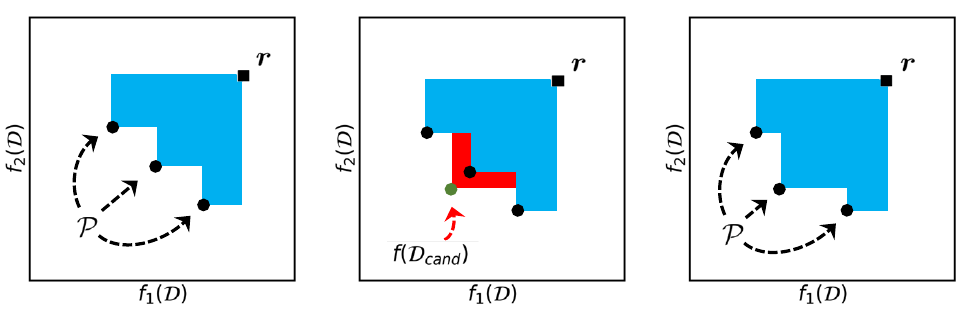}
    \spacingset{1}
    \caption{(left) The reference point $\bm{r}$ bounds the space dominated by the ND set $\mathcal{P}$. The blue area highlights the $HV$ bounded by the reference point and ND set. (center) $\mathcal{D}_{cand}$ is the candidate point that gives the $HVI$ highlighted in red. (right) After the inclusion of the candidate point, the point previously in the ND set but dominated by the candidate point is removed from the ND set.}
    \vspace{-10pt}
    \spacingset{1.5}
    \label{fig:illustration_pareto_front}
\end{figure}
\begin{definition}\label{def:acquisition_function}
	Acquisition function $\alpha_{EHVI}$, is the expectation of HVI over the joint posterior distribution of objective functions, $\Phi_{\eta, \zeta}$, such that $\alpha_{EHVI} = \mathbb{E}\Bigl[ HVI (\bm{f}(\mathcal{D}_{cand}), \mathcal{P}, \bm{r})\Bigr]= \int_{0}^{\infty}HVI(\bm{f}(\mathcal{D}_{cand},\mathcal{P},\bm{r}))  \; d\Phi_{\eta, \zeta}$.
\end{definition}
The two objective functions can be modeled using independent GPs,  and $\alpha_{EHVI}$ can be calculated using an efficient box decomposition algorithm, whose time complexity on a single-threaded machine is $O(|\mathcal{P}|)$ where $|\mathcal{P}|$ is the number of non-dominated points as of that iteration \citep{yang2019multi, daulton2020differentiable}. In order to find a good candidate solution $\mathcal{D}_{cand}$ such that $\bm{f}(\mathcal{D}_{cand})$ enters the non-dominated (ND) set $\mathcal{P}$ we need to solve the equation, $\mathcal{D}_{cand} = \underset{\mathcal{D}}{\text{argmax }}\alpha_{EHVI}$, which will contribute towards hyper-volume improvement in each step.

\spacingset{1.25}
\begin{algorithm}[h]
	\caption{Multi-Objective Bayesian Optimization for selecting optimal block size $\mathcal{D}^*$.}
	\begin{algorithmic}[1]
		\State \textbf{Input:}  raw data from the source, window size $N$, tolerance level $\epsilon$, user-defined bounds $U_j$, reference point at origin: $\bm{r} = (r_1,r_2) = (0,0)$, objective functions $\bm{f}(\mathcal{D}) = \bigl(f_1(\mathcal{D}), f_2(\mathcal{D})\bigr)$, $\mathbf{C}_{\epsilon}=2\epsilon$.
		\State Initialize with $k$ points such that $\mathcal{P}^{sol} = \bigl\{\bigr\} \Rightarrow \mathcal{P}=\bigl\{\bigr\} \Rightarrow HV(\mathcal{P},\bm{r}) \gets 0$. $t \gets 1$. 
            \While{$\mathbf{C}_{\epsilon} > \epsilon$}
		\State $\mathcal{D}_{cand} \gets \underset{\mathcal{D}}{\text{argmax}} \;\mathbb{E}\Bigl[ HVI (\bm{f}(\mathcal{D}), \mathcal{P}, \bm{r})\Bigr]$.
		\State $HV(\mathcal{P},\bm{r}) \gets \lambda_M \Bigl(\bigcup_{v=1}^{|\mathcal{P}|}[\bm{r}, \mathcal{P}_v]\Bigr)$. 
		\If{$HV(\mathcal{P},\bm{r}) = 0$}
		\State $r_{step} =  \text{min}(f_1(\mathcal{D}_{cand}), f_2(\mathcal{D}_{cand}))$.
		\State $r_1 \gets r_1 + \beta \times r_{step} $, $r_2 \gets r_2 + \beta \times r_{step} $.
		\State $t \gets 1 $, GOTO 3.
		\EndIf
		\State $HVI\bigl(\bm{f}(\mathcal{D}_{cand}), \mathcal{P}, \bm{r}\bigr) \gets HV\bigl(\mathcal{P} \cup\bm{f}(\mathcal{D}_{cand}), \bm{r}\bigr) - HV(\mathcal{P}, \bm{r})$.
		\If{$HVI\bigl(\bm{f}(\mathcal{D}_{cand}), \mathcal{P}, \bm{r}\bigr) > 0$}
        \State $\mathcal{P}^{sol} \gets \mathcal{P}^{sol} \cup \bigl\{\mathcal{D}_{cand} \bigr\}$.
		\State $\mathcal{P} \gets \mathcal{P} \cup \bigl\{\bm{f}(\mathcal{D}_{cand})\bigr\} \backslash \bigl\{\mathcal{P}_v \; : \; \bm{f}(\mathcal{D}_{cand}) \succ \mathcal{P}_v \; \forall \;  v =1, \dots, |\mathcal{P}| \bigr\}$.
		\EndIf
        \If{$t>N$}
		\State $\mathbf{C}_{\epsilon} = \frac{1}{N} \sum_{a=t-N+1}^{t}|HV_a - HV_{a-1}|$. 
            \State $t \gets t + 1$.
            \Else{}
            \State $t \gets t + 1$.
		\EndIf
	\EndWhile 
	\end{algorithmic}
	\label{algo:MOBO}
\end{algorithm}
\spacingset{1.5} 

MOBO-$\mathcal{D^*}$ (Algorithm \ref{algo:MOBO}) is initialized by placing the reference point $\bm{r}$ at the origin, while the theoretical minimum value of the objectives is zero. The hyper-volume is set to zero, and the ND set $\mathcal{P}$ is set to be empty. Then in each iteration, $\mathcal{D}_{cand}$ is found by maximizing $\alpha_{EHVI}$ and hyper-volume is calculated. As long as the hyper-volume is zero, reference point $\bm{r}$ is moved away from the origin in small steps: $\bm{r} \leftarrow \bm{r} + \beta \times r_{step}$, where $\beta \in (0,1]$ is the growth factor, and $r_{step}$ is the minimum of the two objective functions calculated at $\mathcal{D}_{cand}$, i.e., $r_{step} =  \text{min}(f_1(\mathcal{D}_{cand}), f_2(\mathcal{D}_{cand}))$. Smaller $\beta$ would explore the feasible objective space more slowly, yet more precisely, when the algorithm tries to minimize the objective with respect to $\bm{r}$. Once the reference point has been placed appropriately, the optimization proceeds with maximizing the hyper-volume and constructing the sets $\mathcal{P}$ and $\mathcal{P}^{sol}$. This process is repeated until the stopping criterion is met. We use a moving average-based criterion as in (\ref{eq:conv_criteria}) that monitors the improvement in HV over a number of iterations and stops the algorithm when it is met. We define
\begin{align}\label{eq:conv_criteria}
    \mathbf{C}_{\epsilon} = \frac{1}{N} \sum_{a=t-N+1}^{t}|HV_a - HV_{a-1}| \leq \epsilon,
\end{align}
where $N$ is the number of iterations for which we are monitoring the change in HV, $t$ is the current iteration number, and $\epsilon$ is the tolerance. Note that the user can define the values of $N$ and $\epsilon$ based on the context of the problem. Once the optimization process is complete, it gives the ND solution set $\mathcal{P}^{sol}$. Then, the decision maker, as per their preference, can pick an element from $\mathcal{P}^{sol}$, which we define as $\mathcal{D}^*$. 

\section{Case study: surface roughness of additively manufactured parts}\label{sec:case studies}
The efficacy of MOBO-$\mathcal{D^*}$ is demonstrated using a real-world case study in the field of additive manufacturing (AM). Metal AM technology has multiple advantages, including minimizing waste, costs, development time, and customization of the manufactured parts. However, fatigue performance has been a key barrier to their adoption in critical applications \citep{sanaei2021defects,li2024nondestructive}. One of the detrimental factors impacting the fatigue performance of AM parts is surface roughness, which exists in the form of peaks and valleys \citep{Murakami_2019}. These valleys are scattered across the surface, and the deepest valleys can be viewed as extreme events on the surface \citep{fox2021prediction, nikfar2025extreme}. The maximum valley depth ($S_v$) has been shown to have a good correlation with fatigue performance \citep{gockel2019influence, lee2021surface}, yet scanning the entire part to find it is time-consuming and often impractical.

Our proposed MOBO-$\mathcal{D}^*$ can be leveraged to find the optimal block size that could be used to: 1) model the surface roughness as a distribution of the depth of valleys and 2) use that distribution to make the prediction of the deepest valley in the unscanned parts for fatigue performance and reliability analysis.

\subsection{Description of surface measurements}\label{sec:data_description}
The data used here comes from a two-dimensional optical surface scan of the gauge area of an AM specimen, as illustrated in Figure \ref{fig:point_cloud}. 
Dimensions of the total scanned area are $4 \;  mm \times 5 \; mm$, and each point on the surface represents the surface height at that particular location in the 2D plane. This scanned surface has $\sim$ 200 million such points. In EVT, the surface can be divided into blocks as shown, where the dimensions of each block are indicated as $w_1 \times w_2$. Figure \ref{fig:point_cloud} also illustrates an example block from the surface where the blue color represents the peaks and the red color represents the valleys. 

\begin{figure}[H]
\centering
\includegraphics[width=0.45\textwidth]{./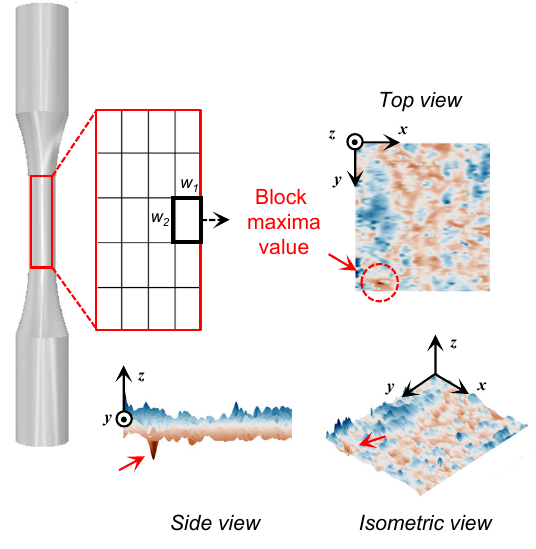}
\spacingset{1}
\caption{A specimen with the scanned surface (red rectangle). It is divided into blocks, with dimensions $w_1$ and $w_2$, in EVT to find the deepest valley of the whole part.}
\spacingset{1.5}
\label{fig:point_cloud}
\end{figure}

To find the optimal block size for extreme value analysis, we use MOBO-$\mathcal{D^*}$ to approximate the Pareto front. As MOBO-$\mathcal{D^*}$ can explore the feasible region with different dimensionality, we demonstrate the methodology in two cases: blocking with square blocks, a one-dimensional (1D) case ($n=1$); and blocking with rectangular blocks, a two-dimensional (2D) case ($n=2$).  

\subsection{MOBO-$\mathcal{D^*}$ square block size optimization (1D)}\label{sec:case_study_AM_d*_one_param}
A smaller $1 \;  mm \times 5 \; mm$ area is used as the input data for BM evaluation. The goodness of fit objective ($f_2$) and $\hat{q}$ are evaluated solely based on this sample. The deepest valley for the overall gauge area of the specimen is used as the value of $q$ in the definition of $f_1$. For the first part of the analysis, we restrict the block sizes to be equal, i.e., there is a single block size variable $\mathcal{D}=\{D_1\}$. The surface is then split into $D_1\times 5D_1$ blocks to account for the shape of the scans. We set  $U=200$.

We initialize the optimization process by placing the reference point $\bm{r}$ at the origin as observed in Figure \ref{fig:iterations_AM} (top); it is updated as described in Steps 7 and 8 in Algorithm 1 of the MOBO-$\mathcal{D^*}$. For stopping criteria, we choose $N=5$ and $\epsilon = 10^{-5}$ to allow enough exploration of the objective space. MOBO-$\mathcal{D}^*$ stops after 20 more iterations to construct the non-dominated (ND) set, as can be observed from the iterations (red to green in ascending order) in Figure \ref{fig:iterations_AM} (bottom). The same figure also shows the evolution of hyper-volume with iterations. As seen in Figure \ref{fig:iterations_AM} (bottom), we observe relatively fast convergence as indicated by $HVI$. Based on the optimization process shown in Figure \ref{fig:iterations_AM}, the Pareto front can be approximated as shown in Figure \ref{fig:pareto_front_AM} (left). 

\begin{figure}[H]
	\centering
	\includegraphics[width=0.85\linewidth]{./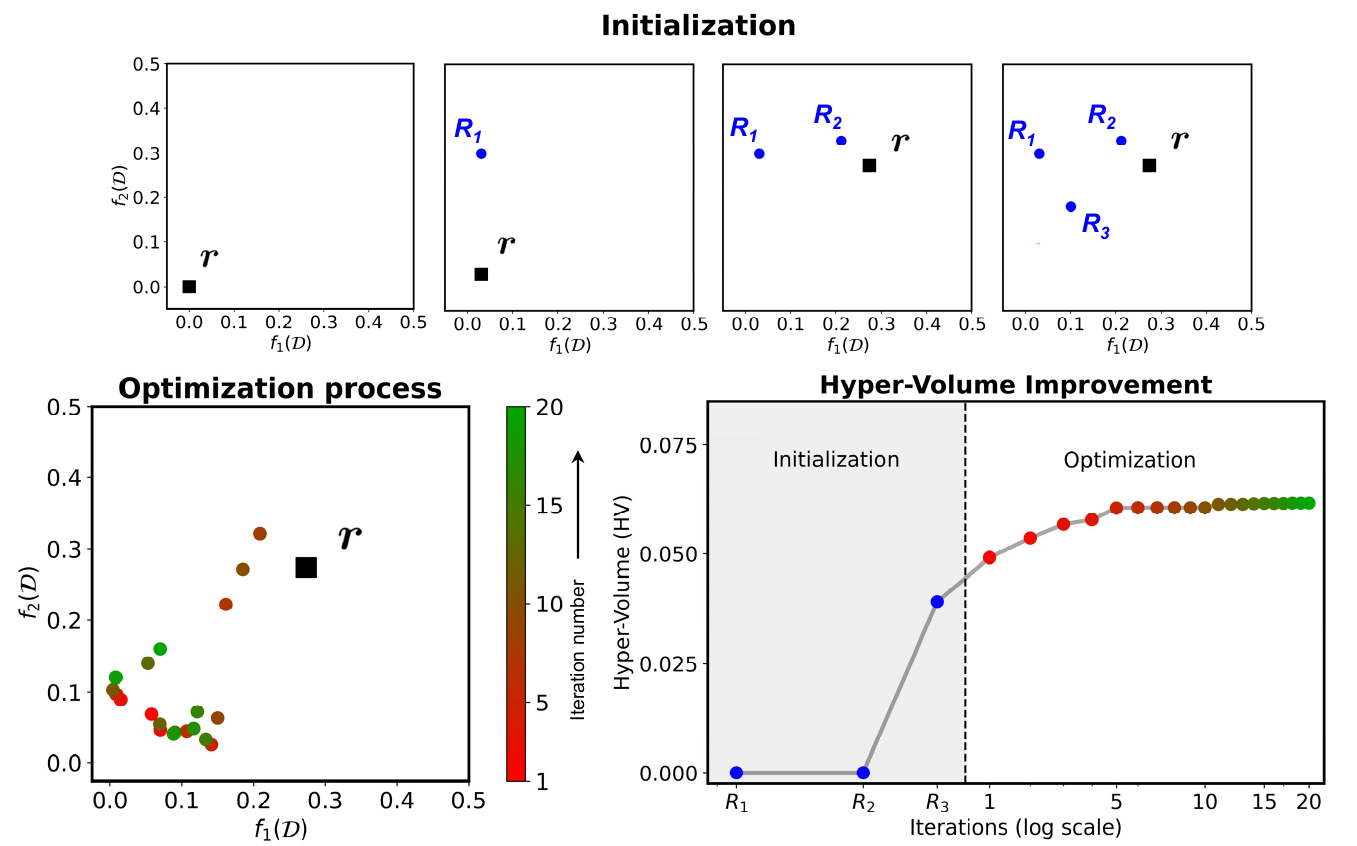}
    \spacingset{1}
    \caption{(top) Initialization is done by placing $\bm{r}$ at the origin. In the first and second steps, no gain in hyper-volume is observed, therefore $\bm{r}$ is updated. The reference point $\bm{r}$ is fixed in the third step, and the optimization process proceeds further. (bottom) The optimization process consists of 20 iterations, colored from red to green in ascending order.}
    \spacingset{1.5}
    \label{fig:iterations_AM}
\end{figure}

\begin{figure}[ht]
    \centering
    \includegraphics[height=0.35\linewidth]{./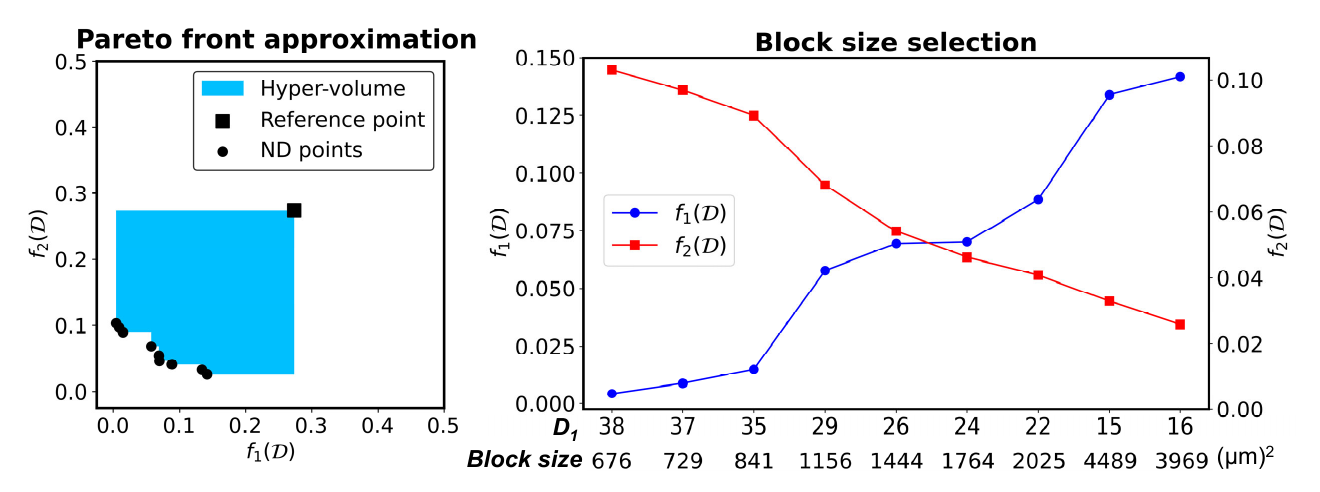}
    \spacingset{1}
    \caption{(left) Approximated Pareto front based on the optimization process. (right) Both objective functions are plotted against the block size parameter $D_1$ and show a trade-off between the two, which the decision maker can use to get the $\mathcal{D}^*$.}
    \spacingset{1.5}
    \label{fig:pareto_front_AM}
\end{figure}
\spacingset{1.5}

Table \ref{tab:Pareto_front_comparison} lists the obtained non-dominated solutions, and Figure \ref{fig:pareto_front_AM} (right) gives the resulting block size. There is a clear trade-off between the objectives, as shown in Figure \ref{fig:pareto_front_AM} (right); therefore, the choice of $\mathcal{D}^*$ depends on the decision-maker's preference. For smaller blocks, GoF is not as good as accuracy, while on the other end of the spectrum, accuracy is not as good as GoF.

\spacingset{1}
\begin{table}[h]
\centering
    \caption{Comparison between the Pareto front obtained from the enumeration and the one approximated by MOBO-$\mathcal{D}^*$.}
    \begin{tabular}{|C{0.5cm} C{1cm} C{1cm} | C{0.5cm} C{1cm} C{1cm}|}
    \hline
    \multicolumn{3}{|c|}{Enumeration}  & \multicolumn{3}{c|}{MOBO-$\mathcal{D}^*$} \\ 
    \hline
    $D_1$ & $f_1(\mathcal{D})$ & $f_2(\mathcal{D})$ & $D_1$ & $f_1(\mathcal{D})$ & $f_2(\mathcal{D})$ \\
    \hline 
    \textbf{38} & \textbf{0.004} & \textbf{0.103} & \textbf{38} & \textbf{0.004} & \textbf{0.103} \\
    \textbf{37} &  \textbf{0.009} & \textbf{0.097} & \textbf{37} &  \textbf{0.009} & \textbf{0.097}\\
    \textbf{35} &  \textbf{0.015} & \textbf{0.089} & \textbf{35} &  \textbf{0.015} & \textbf{0.089}  \\
    36 & 0.017 & 0.081  & - &  - & - \\
    33 & 0.023 & 0.071  & - & - & - \\
    32 & 0.038 & 0.068 & - & - & - \\
    30 & 0.042 & 0.058 & - &  - & - \\
    - & - & - & 29 &  0.058 & 0.068 \\
    28 & 0.067 & 0.051 & - & - & - \\
    - & - & - & 26 &  0.069 & 0.054 \\
    \textbf{24} &  \textbf{0.070} & \textbf{0.046} & \textbf{24} &  \textbf{0.070} & \textbf{0.046} \\
    \textbf{22} &  \textbf{0.088} & \textbf{0.041} & \textbf{22} &  \textbf{0.088} & \textbf{0.041} \\
    18 & 0.102 & 0.028 & - & - & - \\
    - & - & - & 15 &  0.134 & 0.033 \\
    \textbf{16} &  \textbf{0.142} & \textbf{0.026} & \textbf{16} &  \textbf{0.142} & \textbf{0.026} \\
    \hline
    \end{tabular}
    \label{tab:Pareto_front_comparison}
\end{table}
\spacingset{1.5}

We have validated the performance of MOBO-$\mathcal{D}^*$ by comparing it against the true Pareto front obtained as a result of full enumeration (exact solution) using all possible values of $D_1$, based on user-defined bounds (between 1 and 200). Although there is no notion of a reference point $\bm{r}$ in the enumeration, for fair comparison, we use the reference point calculated by MOBO-$\mathcal{D}^*$ to calculate the hyper-volume $(HV)$ for each enumerated point. This comparison is given in Figure \ref{fig:enumeration_vs_optimization_AM} (right) and shows that MOBO-$\mathcal{D}^*$ achieves an HV of 0.0617 which is just $2\%$ less than the maximum possible HV of 0.0632 in this case. 

\begin{figure}[H]
	\centering
    \vspace{15pt}
	\includegraphics[height=0.325\linewidth]{./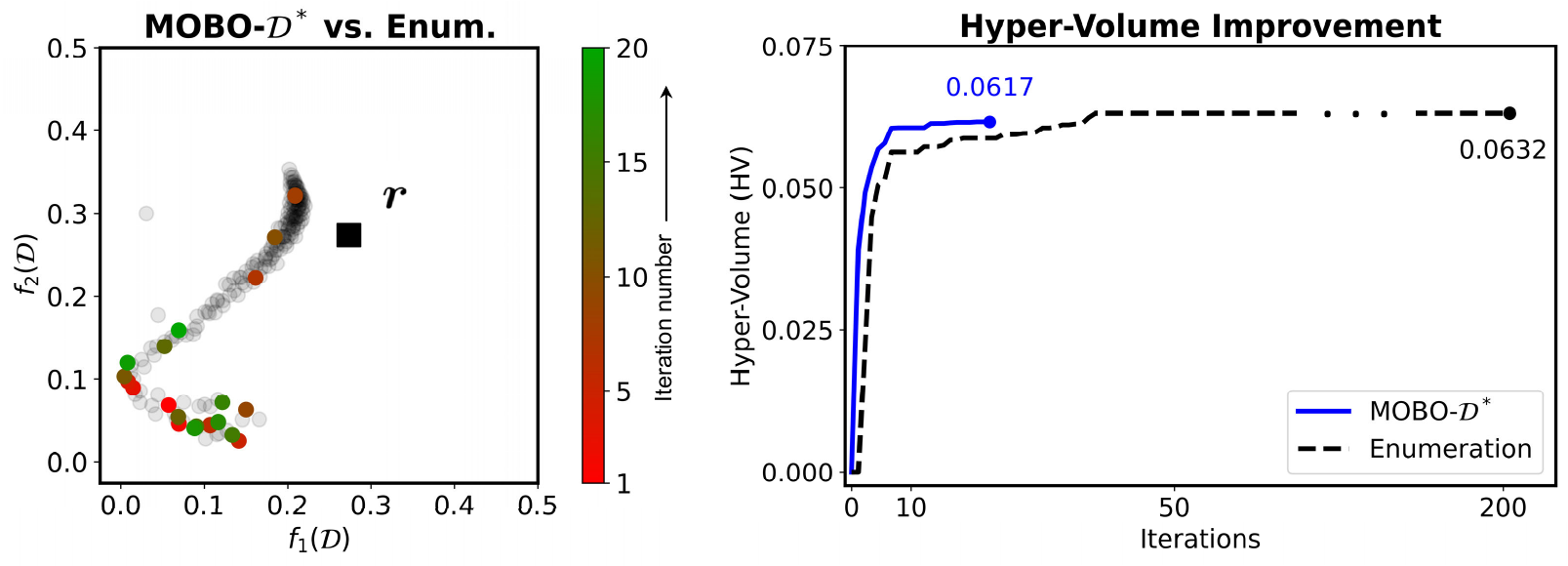}
    \spacingset{1}
    \caption{(left) MOBO-$\mathcal{D}^*$ iterations are plotted on a color scale from red to green (by iteration number), while enumeration calculations are plotted in black. (right) Comparison of $HVI$ between MOBO-$\mathcal{D}^*$ and enumeration. MOBO-$\mathcal{D}^*$ achieves an $HV$ which is just $2\%$ short of the maximum possible in 20 iterations.}
    \spacingset{1.5}
    \label{fig:enumeration_vs_optimization_AM}
\end{figure}

We have also evaluated the MOBO-$\mathcal {D}^*$ against two practical, reasonable approaches that can be used to approximate the Pareto front for this problem, as shown in Figure \ref{fig:hv_opt_vs_random}. The first approach randomly selects 20 points, the same number of iterations as has been used in the MOBO-$\mathcal{D}^*$, within bounds, and plots the objective values to approximate the Pareto front (instances 1, 2, and 3). The second approach is a structured approach in which we calculate objective values for 20 equidistant points, between 2 and 192, from within the bounds and plot them to approximate the Pareto front (instance 4). From Figure \ref{fig:hv_opt_vs_random}, it can be observed that MOBO-$\mathcal{D}^*$ gives $16.9\%$ higher $HVI$, on average, than the first approach, while $10.2\%$ higher $HVI$ than the second approach. 

\begin{figure}[ht]
	\centering
	\includegraphics[height=0.365\linewidth]{./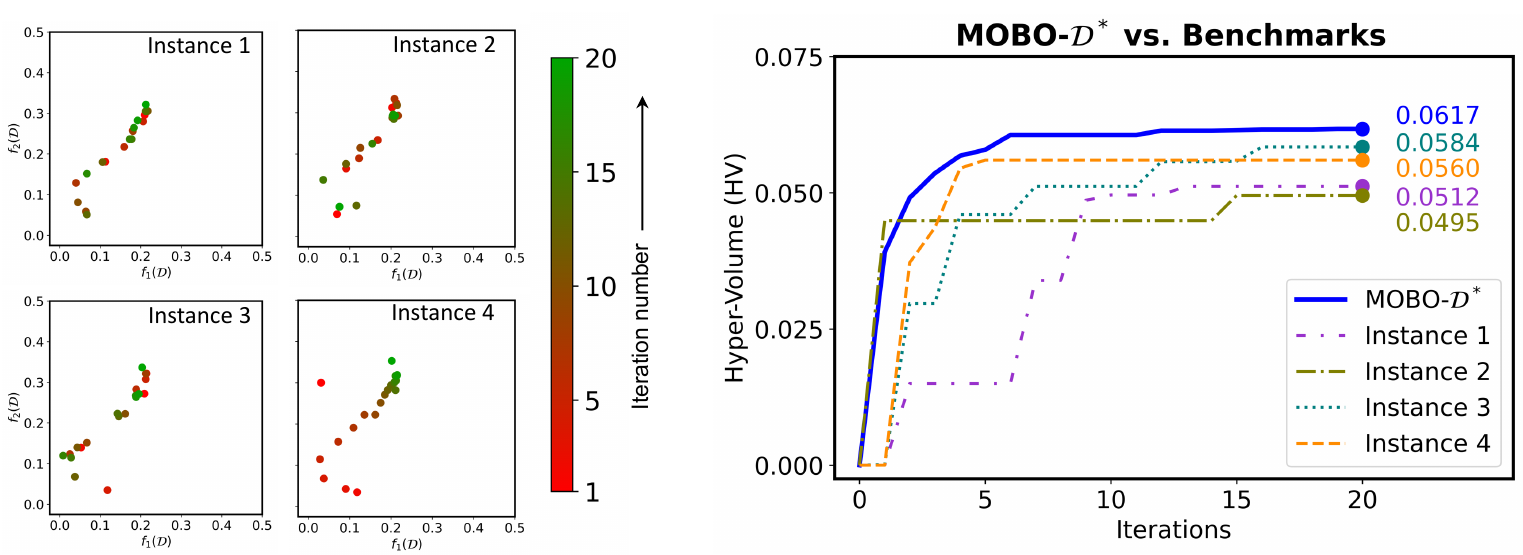}
    \spacingset{1}
    \vspace{-7pt}
    \caption{Comparison between MOBO-$\mathcal{D}^*$ and the four random instances. MOBO-$\mathcal{D}^*$ gives $16.9\%$ higher $HVI$, on average, than the random approach, while $10.2\%$ higher $HVI$ than the structured approach.}
    \vspace{-15pt}
    \spacingset{1.5}
    \label{fig:hv_opt_vs_random}
\end{figure}

Finally, we test the out-of-sample performance of the optimal solutions found. We hypothesize that if we use the same process conditions that were used to print this specimen to print more specimens, we can extend the analysis to the newly printed ones. Therefore, we use data from five more specimens with the same geometry and process conditions, and use the $\mathcal{D}$ values from the ND set to calculate both objective functions, i.e., $f_1(\mathcal{D})$ and $f_2(\mathcal{D})$, effectively giving the out-of-sample comparison for the solutions in the ND set. The results are given in Table \ref{tab:validation_benchmark_1D} where the mean value and standard deviation are given for these five specimens for $\mathcal{D}$ values. We notice the same trend, i.e., small $f_1({\mathcal{D}})$ values for small blocks and small $f_2(\mathcal{D})$ for large blocks, and vise versa for both objectives as in Figure \ref{fig:pareto_front_AM} (right). This shows that if the conditions that cause a certain phenomenon remain unchanged, we can use the optimized block size to extend the analysis to the new unseen data.

\spacingset{1}
\begin{table}[h]
	\centering
            \caption{Mean and standard deviation of $f_1({\mathcal{D}})$ and $f_2(\mathcal{D})$ values for the five specimens based on the potential choices of $\mathcal{D}^*$ from Figure \ref{fig:pareto_front_AM}.}
            \vspace{-5pt}
		\begin{tabular}{|C{0.5cm}| C{1cm} C{1.55cm} | C{1cm} C{1.55cm}|}
		\hline
            \multicolumn{1}{|c|}{$\mathcal{D}$} & \multicolumn{2}{c|}{$f_1(\mathcal{D})$} & \multicolumn{2}{c|}{$f_2(\mathcal{D})$} \\
            \hline
			$D_1$ &  Mean & Std. dev. & Mean & Std. dev. \\
                \hline
                38 & 0.055 & 0.040 & 0.029 & 0.002\\
                37 & 0.058 & 0.040 & 0.028 & 0.001\\
                35 & 0.056 & 0.040 & 0.027 & 0.001\\
                29 &  0.059 & 0.038 & 0.025 & 0.002\\
                26 &   0.060 & 0.040 & 0.021 & 0.001\\
                24 &  0.063 & 0.036 & 0.020 & 0.000\\
                22 & 0.064 & 0.035 & 0.019 & 0.002\\
                15 & 0.071 & 0.040 & 0.021 & 0.006\\
                16 & 0.070 & 0.035 & 0.020 & 0.004\\
                \hline
		\end{tabular}
		\label{tab:validation_benchmark_1D}
\end{table}
\spacingset{1.5}

\subsection{MOBO-$\mathcal{D^*}$ Rectangular block size optimization (2D)}\label{sec:case_study_AM_d*_two_param}
In this subsection, we optimize block size with rectangular blocks, i.e., $w_1 \neq w_2$, and have two block size parameters such that $\mathcal{D}=\{D_1, D_2\}$, hence making it a two-dimensional problem. Block size parameter $D_1$ is defined here as the number of segments into which the shorter dimension (1 $mm$) is divided, and $D_2$ is the number of segments into which the longer dimension (5 $mm$) is divided. Bounds are the same as we defined for the one-parameter problem, $1\leq D_1, D_2\leq U=200$. It is pertinent to highlight here that the computational cost of enumeration rises dramatically as possible combinations of $D_1$ and $D_2$ are $200 \times 200 = 40,000$. For stopping criteria, we choose $N=10$ and $\epsilon = 10^{-5}$ to incentivize our algorithm to explore more of the decision variable space, as it is significantly larger than the 1D problem. Based on this criteria, MOBO-$\mathcal{D}^*$ performs 67 iterations to approximate the Pareto front using the same approach as described in Section \ref{sec:case_study_AM_d*_one_param}. The optimization process and approximated Pareto front are shown in Figures \ref{fig:iterations_AM_two_param} and \ref{fig:pareto_front_AM_two_param}.

\begin{figure}[h]
	\centering
	\includegraphics[height=0.35\linewidth]{./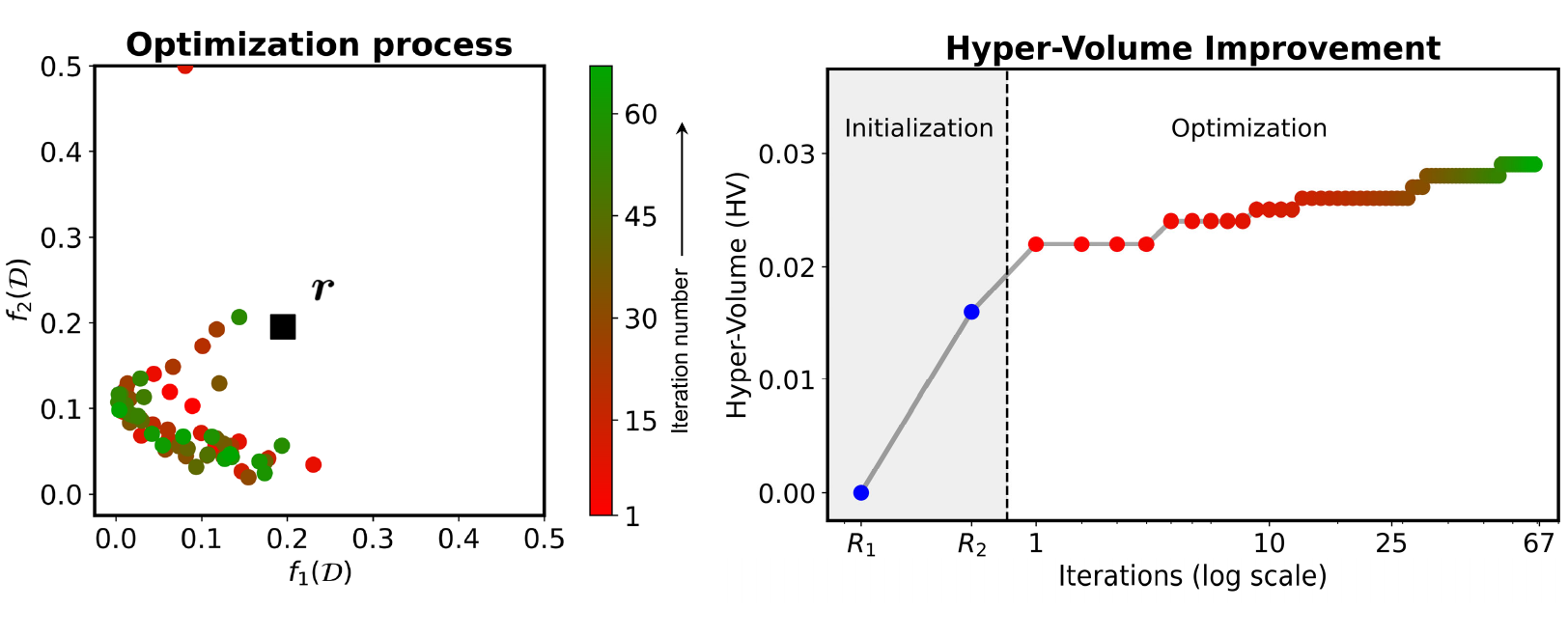}
    \spacingset{1}
    \caption{Optimization process consists of 67 iterations colored from red to green in ascending order.}
    \vspace{-10pt}
    \spacingset{1.5}
    \label{fig:iterations_AM_two_param}
\end{figure}

\begin{figure}[h]
	\centering
        {\includegraphics[height=0.35\linewidth]{./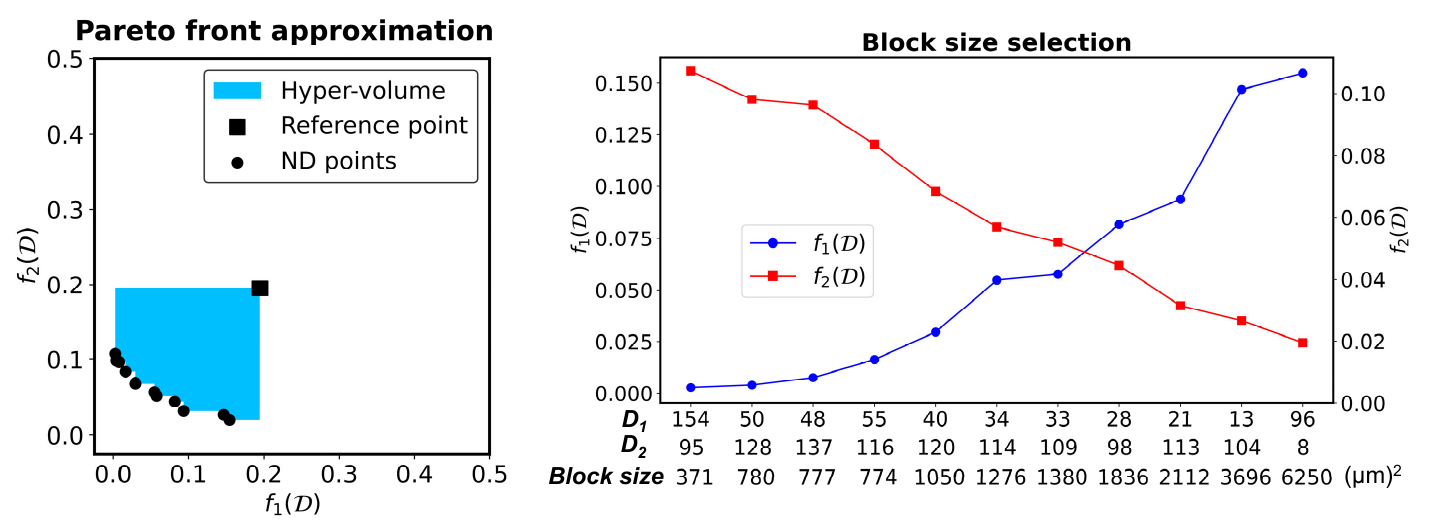}}
        \spacingset{1}
    \caption{(left) Approximated Pareto front based on the optimization process. (right) Both objective functions are plotted against the block size parameters $D_1$ and $D_2$ and show a trade-off between the two, which the decision maker can use to get the $\mathcal{D}^*$.}
    \spacingset{1.5}
\label{fig:pareto_front_AM_two_param}
\end{figure}

Then MOBO-$\mathcal{D}^*$ is evaluated against two benchmark approaches as was done for 1D case. Instances 1, 2, and 3 randomly select 67 points within bounds, while instance 4 uses 67 points located on a regular grid from the solution space in a structured way, as demonstrated in the \ref{app:structured_approach}. From Figure \ref{fig:hv_random_two_param}, it can be observed that MOBO-$\mathcal{D}^*$ gives $5.5 \%$ higher $HVI$, on average, than the random instances, while it gives $5.1 \%$ higher $HVI$ than the structured approach.

\begin{figure}[H]
	\centering
    \includegraphics[height=0.35\linewidth]{./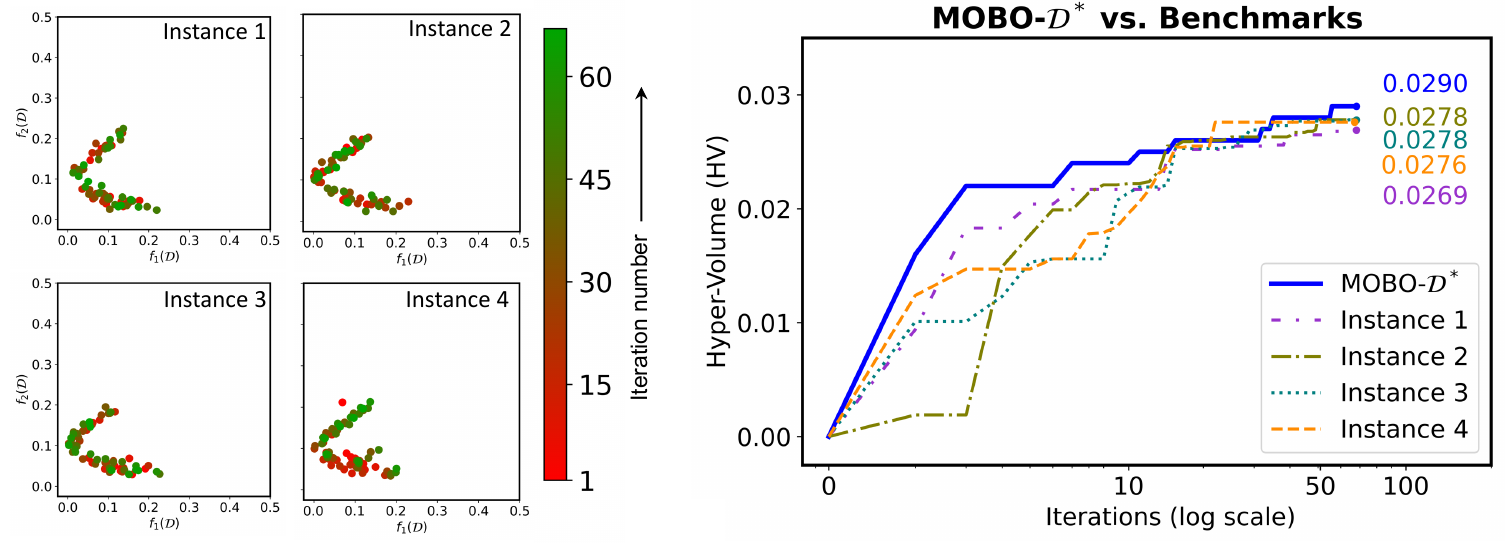}
    \spacingset{1}
    \caption{Comparison between MOBO-$\mathcal{D}^*$ and four random sets of $D_1$ and $D_2$ values. MOBO-$\mathcal{D}^*$ gives $5.5\%$ higher $HVI$, on average, than the random approach, while $5.1\%$ higher $HVI$ than the structured approach.}
    \spacingset{1.5}
    \label{fig:hv_random_two_param}
\end{figure}

Again, we validate our hypothesis of extending this analysis to new unseen data along the same lines as described in Section \ref{sec:case_study_AM_d*_one_param} by using data from the five new specimens to calculate both objective functions, i.e., $f_1(\mathcal{D})$ and $f_2(\mathcal{D})$.  The results are given in Table \ref{tab:validation_benchmark_2D}, where the mean value and standard deviation are given for these five specimens for $\mathcal{D}$ values. 

As already alluded to in Section \ref{sec:prob form}, we notice here that although the first set of block size parameters, i.e., $\mathcal{D}=\{157, 95\}$, is a minimizer of $f_1(\mathcal{D})$ in the training data set, here it gives the second highest value of $f_1(\mathcal{D})$, showing out-of-sample performance deterioration. This indicates possible overfitting to the training data and further underpins the importance of including goodness of fit as a second objective in the proposed framework. Apart from this, the overall trend remains the same for both objectives as in Figure \ref{fig:pareto_front_AM_two_param} (right), thus showing that we can use the optimized block size to extend the analysis to the new unseen data if the underlying process conditions remain the same.

\spacingset{1}
\begin{table}[H]
		\centering
            \caption{Mean and standard deviation of $f_1({\mathcal{D}})$ and $f_2(\mathcal{D})$ values for the five specimens based on the potential choices of $\mathcal{D}^*$ from Figure \ref{fig:pareto_front_AM_two_param}.}
		\begin{tabular}{|C{0.5cm} C{0.5cm}| C{1cm} C{1.55cm} | C{1cm} C{1.55cm}|}
		\hline
            \multicolumn{2}{|c|}{$\mathcal{D}$} & \multicolumn{2}{c|}{$f_1(\mathcal{D})$} & \multicolumn{2}{c|}{$f_2(\mathcal{D})$} \\
            \hline
			$D_1$ & $D_2$& Mean & Std. dev. & Mean & Std. dev. \\
                \hline
                154 & 95 & 0.077 & 0.034 & 0.030 & 0.002\\
                50 & 128 & 0.058 & 0.039 & 0.028 & 0.001\\
                48 & 137 & 0.056 & 0.040 & 0.028 & 0.003\\
                55 & 116 & 0.056 & 0.040 & 0.026 & 0.002\\
                40 & 120& 0.057 & 0.039 & 0.025 & 0.001\\
                34 & 114 & 0.057 & 0.040 & 0.023 & 0.002\\
                33 & 109 &  0.059& 0.042& 0.021& 0.001\\
                28 & 98 & 0.060 & 0.039 & 0.018 & 0.002\\
                21 & 113 & 0.063 & 0.037 & 0.018 & 0.001\\
                13 & 104 & 0.073 & 0.032 & 0.019 & 0.006\\
                96 & 8 & 0.088 & 0.047 & 0.022 & 0.005\\
                \hline
		\end{tabular}
		\label{tab:validation_benchmark_2D}
\end{table}
\spacingset{1.5}

\section{Extension to high dimensional domains}\label{sec:simulation}
\subsection{MOBO-$\mathcal{D^*}$ cube block size optimization (3D)}
As mentioned in Section \ref{sec:prob form}, MOBO-$\mathcal{D}^*$ can be applied to an $n$-dimensional domain; we demonstrate its generalizability using a simulated case study with a three-dimensional domain ($200 \times 200 \times 200$ $\mu m^3$) as shown in Figure \ref{fig:3D_simulation_lattice} (left). We assume that a phenomenon is observed at a resolution of 1 $mm$ in this domain, i.e., the distance between any two observed values is 1 $mm$ in all dimensions, and the total number of observations in this domain is $8,000,000$. Each observed value is an $i.i.d.$ sample from a normal distribution $\mathcal{N}(0,5^2)$. Using the same methodology as we adopted in Section \ref{sec:case studies}, we select a part of dimension $100 \times 100 \times 100$ $\mu m^3$ from the \textit{full domain} as shown in Figure \ref{fig:3D_simulation_lattice} (center). The aim here is to optimize block size, shown in Figure \ref{fig:3D_simulation_lattice} (right), for EVT. The \textit{selected domain} is used to build the model for the calculation of $\hat{q}$ to be used in $f_1(\mathcal{D})$ and calculation of $f_2(\mathcal{D})$. The maximum value in the entire \textit{full domain} is used as $q$ in the first objective $f_1(\mathcal{D})$. Each \textit{Block} has dimensions $w_1 \times w_2 \times w_3$ $\mu m^3$. Therefore, we have three block size parameters, i.e., $\mathcal{D} = \{D_1, D_2, D_3\}$, hence making it a three-dimensional problem. $D_j \; (j=1,2,3)$ is the number of parts in which $j^{th}$ dimension of the \textit{selected domain} is divided. Bounds on these parameters are set to $1\leq D_1, D_2, D_3\leq 50$. Thus, there are $125,000$, possible combinations of block size parameters and doing an enumeration would be computationally prohibitive. For stopping criteria, we choose $N=15$ and $\epsilon = 10^{-6}$, which achieved convergence at 75 iterations. Here, we used the window size, $N=15$ to allow our algorithm to explore more decision variable space as it is significantly larger in this case. 
\begin{figure}[H]
	\centering
	\includegraphics[height=0.33\linewidth]{./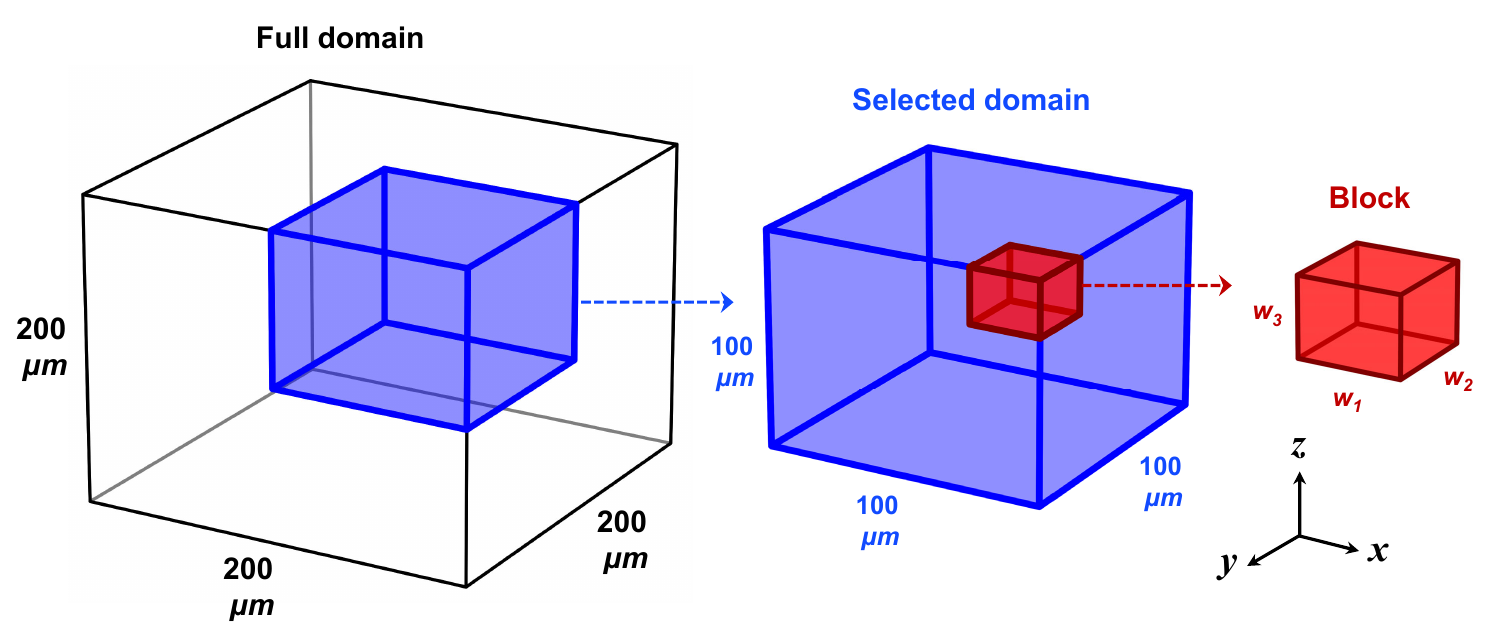}
    \spacingset{1}
    \caption{(left) Three-dimensional \textit{full domain} with dimensions $200 \times 200 \times 200$ $\mu m^3$ in which a phenomenon is observed at a resolution of 1 $mm$. (center) A \textit{selected domain} with dimensions $100 \times 100 \times 100$ $\mu m^3$ is used to build the model. (right) \textit{Block} which is used to extract the maxima values from the \textit{selected domain}. The aim here is to optimize its dimensions $w_1, w_2$, and $w_3$.}
    \spacingset{1.5}
    \label{fig:3D_simulation_lattice}
\end{figure}
\begin{figure}[h]
	\centering
	\includegraphics[height=0.35\linewidth]{./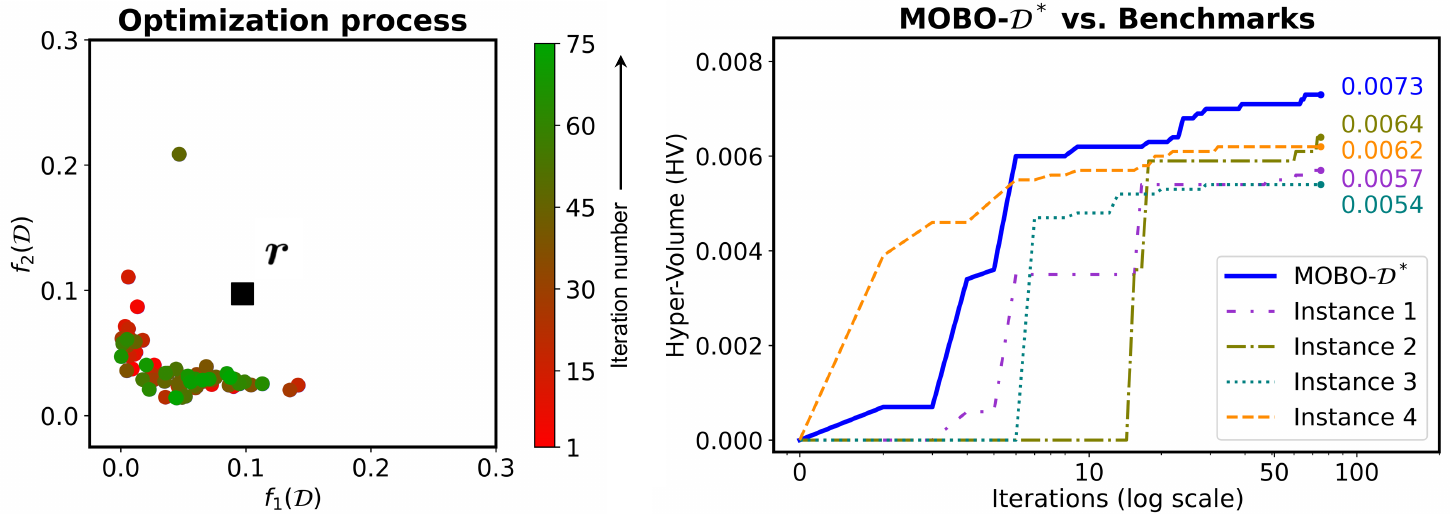}
    \spacingset{1}
    \caption{(left) Optimization process consists of 75 iterations colored from red to green color in ascending order. (right) Comparison of $HVI$ between MOBO-$\mathcal{D}^*$ and four random sets of $D_1, D_2$ and $D_3$ values. MOBO-$\mathcal{D}^*$ gives $25.8\%$ higher $HVI$, on average, than the random approach, while $17.7\%$ higher $HVI$ than the structured approach.}
    \spacingset{1.5}
    \label{fig:mobo_simulation}
\end{figure}

Figure \ref{fig:mobo_simulation} shows the iterations performed by MOBO-$\mathcal{D}^*$ and the comparison of MOBO-$\mathcal{D}^*$ with three random instances and one structured approach as was used in 2D case (demonstrated in the \ref{app:structured_approach}), in terms of hyper-volume ($HV$). MOBO-$\mathcal{D}^*$ gives $25.8\%$ higher $HVI$, on average, than the random approach, while $17.7\%$ higher $HVI$ than the structured approach. Note that in instance 4, initially, it outperforms MOBO-$\mathcal{D}^*$, which might have happened because it was exploring the entire solution space in a structured way, but then it stagnated. On the other hand, MOBO-$\mathcal{D}^*$ quickly catches up and exhibits a continuous improvement in $HV$. Figure \ref{fig:pareto_front_AM_three_param} gives the approximated Pareto front, which could be used for choosing  $\mathcal{D}^*$.

\begin{figure}[h]
	\centering
        {\includegraphics[height=0.325\linewidth]
        {./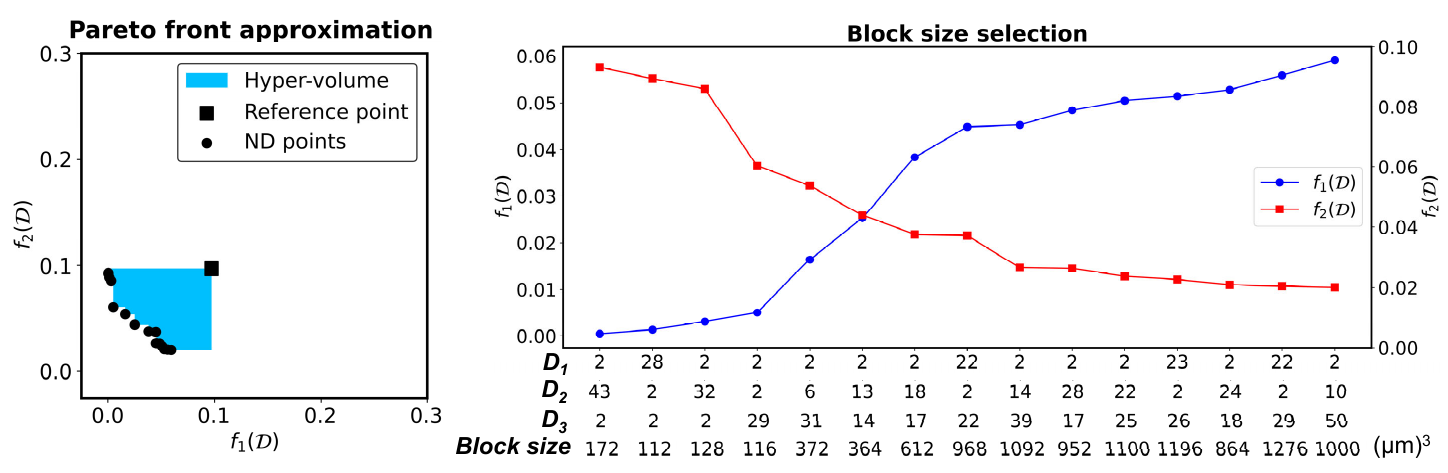}}
        \spacingset{1}
    \vspace{-10pt}
    \caption{(left) Approximated Pareto front based on the optimization process. (right) Both objective functions are plotted against the block size parameters $D_1$, $D_2$, and $D_3$ and show a trade-off between the two, which could be used to get the $\mathcal{D}^*$.}
    \spacingset{1.5}
    \label{fig:pareto_front_AM_three_param}
\end{figure}

To validate the hypothesis of extending the analysis to new realizations of the dataset, we simulate $500$ \textit{full domains} as shown in Figure \ref{fig:3D_simulation_lattice} (left) and use the $\mathcal{D}$ values to calculate both objective functions. The results are given in Table \ref{tab:validation_benchmark_3D} where the mean value and standard deviation are given for these $500$ simulated domains for the Pareto optimal $\mathcal{D}$ values. We notice the same trend for both objectives as in Figure \ref{fig:pareto_front_AM_three_param} (right), i.e., as the block size increases $f_1(\mathcal{D})$ also increases but $f_2(\mathcal{D})$ decreases and vise versa. This shows that if the \emph{conditions} that cause a certain phenomenon remain unchanged, the normal distribution $\mathcal{N}(0,5^2)$ in this case, we can use the optimized block size to extend the analysis to future realizations.

\spacingset{1}
\begin{table}[H]
		\centering
            \caption{Mean and standard deviation of $f_1({\mathcal{D}})$ and $f_2(\mathcal{D})$ values for 500 simulated datasets based on the potential choices of $\mathcal{D}^*$ from Figure \ref{fig:pareto_front_AM_three_param}.}
            \vspace{-5pt}
		\begin{tabular}{|C{0.5cm} C{0.5cm} C{0.5cm} | C{1cm} C{1.55cm} | C{1cm} C{1.55cm}|}
		\hline
            \multicolumn{3}{|c|}{$\mathcal{D}$} & \multicolumn{2}{c|}{$f_1(\mathcal{D})$} & \multicolumn{2}{c|}{$f_2(\mathcal{D})$} \\
            \hline
			$D_1$ & $D_2$ & $D_3$ & Mean & Std. dev. & Mean & Std. dev. \\
                \hline
                2 & 43 & 2 & 0.036 & 0.027 & 0.051 & 0.012\\
                28 & 2 & 2 & 0.035 & 0.028 & 0.061 & 0.014 \\
                2 & 32 & 2 & 0.035 & 0.027 & 0.058 & 0.013\\
                2 & 2 & 29 & 0.035 & 0.028 & 0.060 & 0.015\\
                2 & 6 & 31 & 0.043 & 0.029 & 0.038 & 0.009\\
                2 & 13 & 14 & 0.043 & 0.029 & 0.038 & 0.010\\
                2 & 18 & 17 & 0.047 & 0.030 & 0.032 & 0.008\\
                22 & 2 & 22 & 0.056 & 0.035 & 0.029 & 0.007\\
                2 & 14 & 39 & 0.059 & 0.037 & 0.028 & 0.006\\
                2 & 28 & 17 & 0.054 & 0.035 & 0.029 & 0.006\\
                2 & 22 & 25 & 0.061 & 0.037 & 0.028 & 0.006\\
                23 & 2 & 26 & 0.061 & 0.037 & 0.028 & 0.005\\
                2 & 24 & 18 & 0.056 & 0.035 & 0.029 & 0.006\\
                22 & 2 & 29 & 0.063 & 0.038 & 0.027 & 0.005\\
                2 & 10 & 50 & 0.062 & 0.037 & 0.028 & 0.006\\
                \hline
		\end{tabular}
        \vspace{5pt}
		\label{tab:validation_benchmark_3D}
\end{table}
\spacingset{1.5}

\subsection{Computational time comparison}\label{sec:time_comparison}
In this section, we do a computational time comparison between our proposed MOBO-$\mathcal{D}^*$ and the benchmark approaches. We have performed all of the above computations on a dual-core Intel(R) i7-10700 CPU @ 2.90GHz with 32 GB of RAM. We have plotted the cumulative CPU time taken for the iterations for each 1D, 2D, and 3D case,  as shown in Figure \ref{fig:time_comparison}. It can be clearly observed that the computational cost of our proposed MOBO-$\mathcal{D}^*$  is much lower than that of enumeration, while the $HVI$ given by MOBO-$\mathcal{D}^*$ is higher than the benchmark approaches, thus giving a good trade-off between solution quality and computational time. This dividend becomes more pronounced as the dimensions of the problem and search space become bigger.

\begin{figure}[H]
	\centering
	\includegraphics[width=0.8\linewidth]{./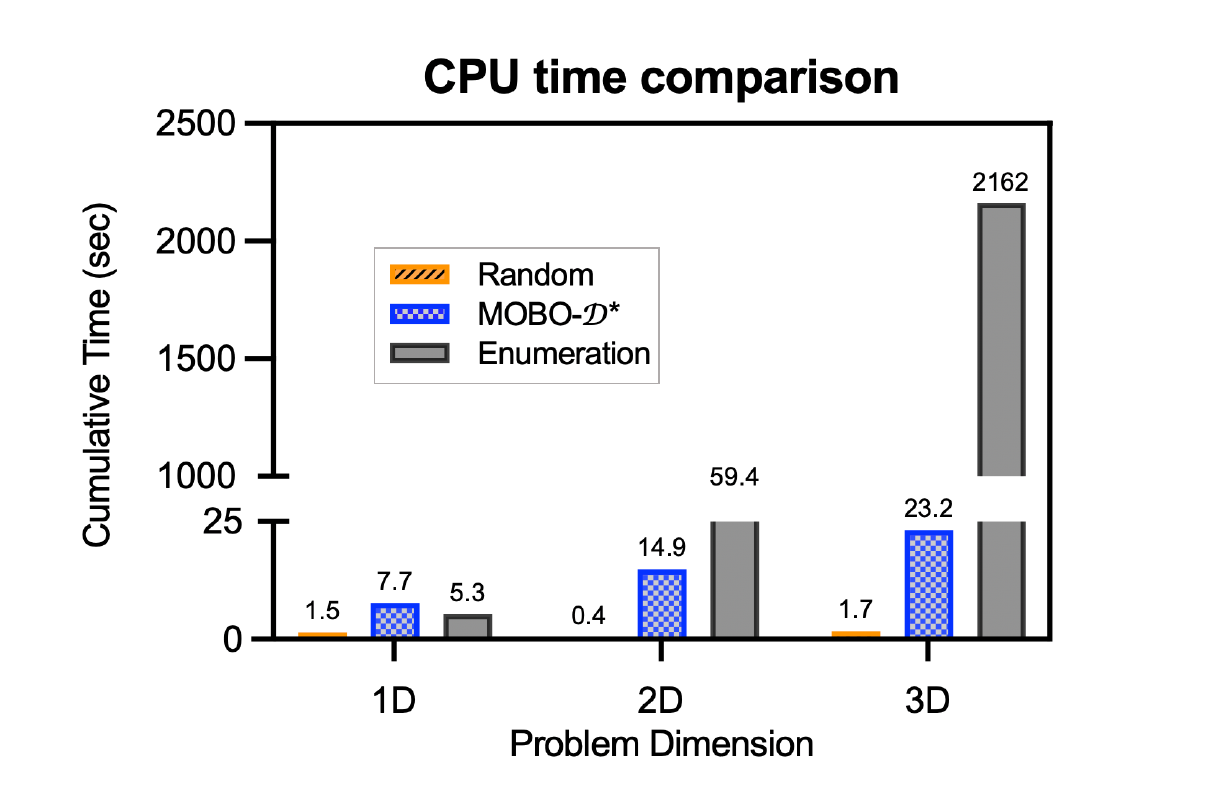}
   \spacingset{1}
  \caption{Computational time (seconds) comparison between the MOBO-$\mathcal{D}^*$ and the benchmark approaches.}
   \vspace{-10pt}
  \spacingset{1.5}
    \label{fig:time_comparison}
\end{figure}

\section{Conclusions and future work}\label{sec:conclusion}
Extreme value theory is extensively used in reliability analysis, and in this article, we have addressed the crucial question of deciding the block size in Block Maxima for extreme value analysis when no natural blocking parameters are available. We propose a general MOBO-$\mathcal{D}^*$ framework for multiple dimensions that can be used in different application domains. We have used the Bayesian framework to find the optimal block size for the domain from which maxima are extracted for the Gumbel distribution. The problem of selecting the optimal block size for maxima collection is formulated as a Multi-Objective Bayesian Optimization (MOBO) problem with two objectives: Kolmogorov-Smirnov (KS) test statistic to quantify goodness-of-fit and the prediction error. Both are minimized to balance bias (small blocks) and variance (large blocks). MOBO-$\mathcal{D}^*$ employs Gaussian Processes as surrogate models and maximizes the hyper-volume of the bi-objective space using a probabilistic acquisition function. A heuristic-based approach determines the appropriate placement of reference point for hyper-volume computation. The process iteratively refines the Pareto front, allowing decision-makers to select the optimal block size. The framework has been validated using both real-world additive manufacturing data and a synthetic simulation study. The proposed framework has also been scrutinized by comparing it against different benchmarks and has been shown to outperform them on average by $17\%$ (one-dimensional case), $5\%$ (two-dimensional case), and $22\%$ (three-dimensional case) in terms of hyper-volume improvement $(HVI)$. In terms of computational time, it could be very effective for the problems with huge solution space as demonstrated in Section \ref{sec:time_comparison}. Successful application of this framework shows that it can be effectively used to address the problem of finding optimal block sizes for accurate extreme value analysis.

There are several future avenues that could be explored based on the proposed research work. First, although this article has focused on the Gumbel distribution, one of the Generalized Extreme Value (GEV) family of distributions, further research efforts can be dedicated to developing a similar framework for Weibull and Fr\'{e}chet distributions. Second, the same methodology can be emulated to develop a tractable framework for finding the optimal threshold in the peak-over-threshold (POT) sampling methodology. Lastly, research efforts can be dedicated to extending the same methodology to develop a framework for multivariate Gumbel cupolas for extreme value analysis in high dimensions. 

\vspace{1em}
\hspace{-1.5em}\textbf{Data Availability Statement}: The data that support the findings of this study are available from the corresponding author upon reasonable request.

\vspace{1em}
\hspace{-1.5em}\textbf{Funding}: This work was supported by the Federal Aviation Administration [grant No. FAA-12-C-AM-AU-A5] and the National Science Foundation [grant No. CMMI-2239307]. 

\vspace{1em}
\hspace{-1.5em}\textbf{Disclosure of interest}: We have no conflicts of interest to disclose.

\spacingset{0.95}
\bibliographystyle{chicago}
\bibliography{main}

\appendix
\renewcommand\thesection{Appendix \Alph{section}}
\spacingset{1.5}
\section{GEV parameter estimation}\label{app:parameter_estimation}
\textbf{MLE estimates.} Likelihood function $\mathcal{L}(\mathbf{\Theta}|x)$ for the Gumbel distribution can be written as follows:
\begin{align}\label{eq:likelihood}
\mathcal{L}(\mathbf{\Theta}|x) =& \prod_{i=1}^{m(\mathcal{D})} \frac{1}{\sigma} \exp \left\{-\frac{x_i-\mu}{\sigma}\right\} \exp\left\{-\exp \left\{ \left( -\frac{x_i-\mu}{\sigma}\right) \right\}\right\}, \\
	\mathcal{L}(\mathbf{\Theta}|x)=&  \frac{1}{\sigma^{m(\mathcal{D})}} \exp \left\{\sum\limits_{i=1}^{m(\mathcal{D})}- \frac{x_i - \mu}{\sigma} - \sum\limits_{i=1}^{m(\mathcal{D})} \exp \left\{- \frac{x_i - \mu}{\sigma}\right\}\right\},
\end{align}
and the MLE estimates of the parameters can be obtained using following equation:
\begin{align}\label{eq:map_estimates}
	\mathbf{\widehat{\Theta}}_{MLE} = \underset{\mathbf{\Theta}}{\text{argmax}} \hspace{0.5em}	   \frac{1}{\sigma^{m(\mathcal{D})}} \exp \left\{\sum\limits_{i=1}^{m(\mathcal{D})}- \frac{x_i - \mu}{\sigma} - \sum\limits_{i=1}^{m(\mathcal{D})} \exp \left\{- \frac{x_i - \mu}{\sigma}\right\}\right\}.
\end{align}

\hspace{-1.5em}\textbf{MAP estimates.} Bayesian inference of the parameters has a number of potential advantages over the point estimators. First, it is not dependent on regularity assumptions \citep{beirlant2006statistics, casella2021statistical}; hence, it offers a viable alternative when point estimators may break down. Second, Bayesian inference is robust against overfitting the estimated parameter because it produces a full posterior distribution, which encapsulates the associated uncertainty. Third, it provides the ability to estimate functions of parameters, which we can not get in the case of point estimators. Using Bayes theorem:
\begin{align}\label{eq:joint_post_parameters}
	h(\mathbf{\Theta|x}) \propto \mathcal{L}(\mathbf{\Theta|x}) \times  \pi(\mathbf{\Theta}),
\end{align}

where $\mathcal{L}(\mathbf{\Theta}|x)$ is the likelihood function as given above and $\pi(\mathbf{\Theta})$ is the prior assigned to the parameters $\mathbf{\Theta}$. We have used Jeffreys Prior, which is a popular choice among the \textit{objective} priors \citep{kass1996selection, ouyang2022robust} due to its immunity to the analyst's biases. 
Another choice could be to use a \textit{subjective} prior, which uses the already existing knowledge to be encoded in the form of a well-known statistical distribution. Many different types and combinations of subjective priors have been used for the Gumbel parameter estimation \citep{vidal2014bayesian, mousa2002bayesian,coles1996abayesian, coles2003anticipating}. 
However, subjective priors could lead to encoding `too much' information, thus leading to bias. Therefore, an objective prior helps address this concern. By definition, Jeffreys Prior can be given as:
\begin{align}\label{eq:def_jeffreys}
		\pi_{J} (\mathbf{\Theta}) = \biggl[\text{det}\Bigl(\mathrm{I} (\mathbf{\Theta})\Bigr)\biggr]^{1/2} = & \Biggl[\text{det}\begin{pmatrix}
\mathrm{I}_{11} & \mathrm{I}_{12} \\
\mathrm{I}_{21} & \mathrm{I}_{22}
\end{pmatrix}\Biggr]^{1/2} ,
\end{align}
where $\mathrm{I(\mathbf{\Theta})}$ is the Fisher's Information matrix whose individual elements are given by:
\begin{align}\label{eq:fisher_info_matrix}
	\mathrm{I}_{ij} = \mathbb{E} \Biggl[-\frac{\partial^2 \log g(x| \theta)}{\partial \theta_i \theta_j}\Biggr].
\end{align}
Jeffreys prior for the Gumbel distribution (\cite{elkahlout2006bayes}) is given as:
\begin{align}\label{eq:jeffreys_gumbel}
		\pi_{J} (\mathbf{\Theta}) \propto \frac{1}{\sigma ^2}.
\end{align}
Using (\ref{eq:jeffreys_gumbel}) it can be shown that it is an improper prior because $\int_{0}^{+\infty} \int_{-\infty}^{+\infty} \pi_{J} (\mathbf{\Theta}) d\mu d\sigma$ does not converge over the support of parameters ($\mu, \sigma$); hence the convergence of the posterior will have to be shown for justifying its use. Using (\ref{eq:likelihood}), (\ref{eq:joint_post_parameters}) and (\ref{eq:jeffreys_gumbel}), the joint posterior of the parameters can be given as:
\begin{align}\label{eq:joint_posterior}
	h(\mathbf{\Theta}|x) \propto   \frac{1}{\sigma^{m(\mathcal{D})+2}} \exp \left\{\sum\limits_{i=1}^{m(\mathcal{D})}- \frac{x_i - \mu}{\sigma} - \sum\limits_{i=1}^{m(\mathcal{D})} \exp \left\{- \frac{x_i - \mu}{\sigma}\right\}\right\}.
\end{align}
\begin{proposition}\label{prop:2_posterior_convergence}
	Joint posterior of the parameters $(\mu, \:\sigma)$ given by (\ref{eq:joint_posterior}) is a proper posterior distribution that converges over the support of parameters.
\end{proposition}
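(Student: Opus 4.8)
The plan is to establish properness directly, by showing that the normalizing constant
\[
Z \;:=\; \int_0^\infty\!\!\int_{-\infty}^\infty \frac{1}{\sigma^{m+2}}\exp\!\Bigl\{\sum_{i=1}^{m}-\tfrac{x_i-\mu}{\sigma}-\sum_{i=1}^{m}\exp\bigl\{-\tfrac{x_i-\mu}{\sigma}\bigr\}\Bigr\}\,d\mu\,d\sigma
\]
is strictly positive and finite, where I abbreviate $m:=m(\mathcal{D})$. Positivity is immediate, since the integrand is continuous and strictly positive on all of $\mathbb{R}\times(0,\infty)$. I will assume the observed block maxima $x=\{x_1,\dots,x_m\}$ contain at least two distinct values; this holds automatically (with probability one) whenever $Y$ has a continuous distribution and $m\ge 2$, and it makes $c:=\bar x-x_{\min}>0$, where $\bar x=\tfrac1m\sum_i x_i$ and $x_{\min}=\min_i x_i$. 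So the entire content of the proof is the finiteness of $Z$.

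First I would carry out the inner integral over $\mu$ for fixed $\sigma$. Factoring the $\mu$-free term $e^{-\sum_i x_i/\sigma}$ out and substituting $u=e^{\mu/\sigma}$ (hence $d\mu=\sigma\,du/u$, $u\in(0,\infty)$) reduces the $\mu$-integral to a gamma integral, giving
\[
\int_{-\infty}^\infty e^{m\mu/\sigma}\exp\bigl\{-e^{\mu/\sigma}S(\sigma)\bigr\}\,d\mu \;=\; \sigma\int_0^\infty u^{m-1}e^{-S(\sigma)u}\,du \;=\; \frac{\sigma\,\Gamma(m)}{S(\sigma)^{m}}, \qquad S(\sigma):=\sum_{i=1}^m e^{-x_i/\sigma},
\]
so that, substituting back, $Z$ collapses to a one-dimensional integral,
\[
Z \;=\; \Gamma(m)\int_0^\infty \frac{e^{-\sum_i x_i/\sigma}}{\sigma^{m+1}\,S(\sigma)^{m}}\,d\sigma .
\]

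Next I would bound the remaining integrand. Keeping only the single largest summand, $S(\sigma)\ge e^{-x_{\min}/\sigma}$, yields
\[
\frac{e^{-\sum_i x_i/\sigma}}{\sigma^{m+1}\,S(\sigma)^{m}} \;\le\; \frac{1}{\sigma^{m+1}}\,\exp\!\Bigl\{-\tfrac{mc}{\sigma}\Bigr\},\qquad c=\bar x-x_{\min}>0 .
\]
I then split $\int_0^\infty=\int_0^1+\int_1^\infty$: on $(0,1]$ the factor $e^{-mc/\sigma}$ decays to $0$ faster than any power of $\sigma$ as $\sigma\downarrow 0$, so the right-hand side is bounded there and the integral over $(0,1]$ is finite; on $[1,\infty)$ I use $e^{-mc/\sigma}\le 1$, leaving $\int_1^\infty\sigma^{-(m+1)}\,d\sigma<\infty$ since $m\ge 2$. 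Hence $Z<\infty$, the density in (\ref{eq:joint_posterior}) is integrable, and $h(\mathbf{\Theta}\mid x)/Z$ is a proper posterior.

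The one delicate point — and the step I expect to be the crux — is the behaviour as $\sigma\downarrow 0$: the bare factor $\sigma^{-(m+1)}$ is not integrable near $0$, and it is exactly the lower bound $S(\sigma)\ge e^{-x_{\min}/\sigma}$ together with $c>0$ that converts the would-be singularity into exponential decay. This cancellation is genuinely necessary: for a degenerate (single-valued) sample $c=0$ and the integral diverges, so the argument must (and does) lean on the mild non-degeneracy assumption. The gamma integral over $\mu$ and the tail estimate at $\sigma\to\infty$ are routine by comparison.
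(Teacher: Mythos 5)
Your proof is correct, and it takes a genuinely different route from the paper's. The paper argues ``by induction'': it verifies the case $m(\mathcal{D})=1$ by the substitution $a=x-\mu$, $b=\sigma$, integrates $\sigma$ only over $[z,\infty)$ with $z>0$, and then asserts $C_1<\infty \Rightarrow C_n<\infty$ for all $n$ without supplying an inductive step. You instead integrate out $\mu$ exactly for general $m$ via the substitution $u=e^{\mu/\sigma}$, reducing $Z$ to a one-dimensional $\sigma$-integral, and then control that integral at both endpoints, isolating the bound $S(\sigma)\ge e^{-x_{\min}/\sigma}$ and the non-degeneracy constant $c=\bar x-x_{\min}>0$ as the mechanism that kills the $\sigma^{-(m+1)}$ singularity at the origin. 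What your approach buys is substantial: it proves integrability over the \emph{full} support $\sigma\in(0,\infty)$, which is what the proposition asserts, whereas the paper's base case is truncated at $\sigma\ge z$ and in fact the $m=1$ posterior is improper on $(0,\infty)$ (the $\mu$-marginalized density behaves like $\sigma^{-2}$ near $0$, so letting $z\downarrow 0$ in the paper's own computation gives divergence). Your argument also makes explicit the hypotheses actually needed --- $m\ge 2$ and at least two distinct block maxima --- which the paper's proof neither states nor uses; this is a real improvement, not just a stylistic difference, and the degenerate-sample counterexample you note confirms that some such assumption is unavoidable. The only cosmetic quibble is that the tail estimate at $\sigma\to\infty$ needs only $m\ge 1$, but since $c>0$ already forces $m\ge 2$ nothing is lost.
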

\begin{proof}
	by induction. For $m(\mathcal{D})=1$:
	\begin{align}
		C_1 = &  K \int_{z}^{+\infty} \int_{-\infty}^{+\infty} h(\mu,\sigma|x) d\mu \: d\sigma, \quad z>0, \nonumber \\
		= & K   \int_{z}^{+\infty}\frac{1}{\sigma^{3}} \int_{-\infty}^{+\infty}  \exp \left\{- \frac{x - \mu}{\sigma} -  \exp \left\{- \frac{x - \mu}{\sigma}\right\}\right\} d\mu \: d\sigma. \nonumber
	\end{align}
	Letting $x-\mu = a$ and $\sigma = b$, same integral can be written with new variables as following:
	\begin{align}
		C_1 = &  \; K \int_{z}^{+\infty} \int_{- \infty}^{+\infty} \frac{1}{b^3} \exp \left\{-\frac{a}{b} - \exp \left\{- \frac{a}{b}\right\} \right\} (da) (db), \nonumber \\
		= & \; K  \int_{z}^{+\infty}  \frac{1}{b^2} \: db,  \nonumber \\
		= & \; K \: \frac{1}{z}, \quad \text{if} \quad z>0 \quad \Rightarrow \; C_1 < \infty \quad \Rightarrow \; C_n < \infty \quad \forall \;\; n. \nonumber
	\end{align}
	Therefore, the joint posterior distribution of the parameters is a proper distribution.
\end{proof}
For point estimates, which will be used for the calculation of first and second objective values, maximum a posteriori (MAP) can be used as follows:
\begin{align}\label{eq:map_estimates}
	\mathbf{\widehat{\Theta}}_{MAP} = \underset{\mathbf{\Theta}}{\text{argmax}} \hspace{0.5em}	h(\mathbf{\Theta}|x) \propto   \frac{1}{\sigma^{m(\mathcal{D})+2}} \exp \left\{\sum\limits_{i=1}^{m(\mathcal{D})}- \frac{x_i - \mu}{\sigma} - \sum\limits_{i=1}^{m(\mathcal{D})} \exp \left\{- \frac{x_i - \mu}{\sigma}\right\}\right\}.
\end{align}

Although we choose to use the Bayesian method in our proposed algorithm for the aforementioned reasons, our methodology does not depend on it exclusively and just needs the point estimates of Gumbel parameters for the calculation of both objectives. Once we have the point estimates $\mathbf{\widehat{\Theta}}$, (\ref{eq:return_level}) can be used to estimate the most extreme value for any number of blocks $m(\mathcal{D})$, and $G(x)$ can be used in the objective functions.

\section{Structured approach for selecting the block sizes}\label{app:structured_approach}

\begin{figure}[H]
	\centering
	\includegraphics[width=0.65\linewidth]{./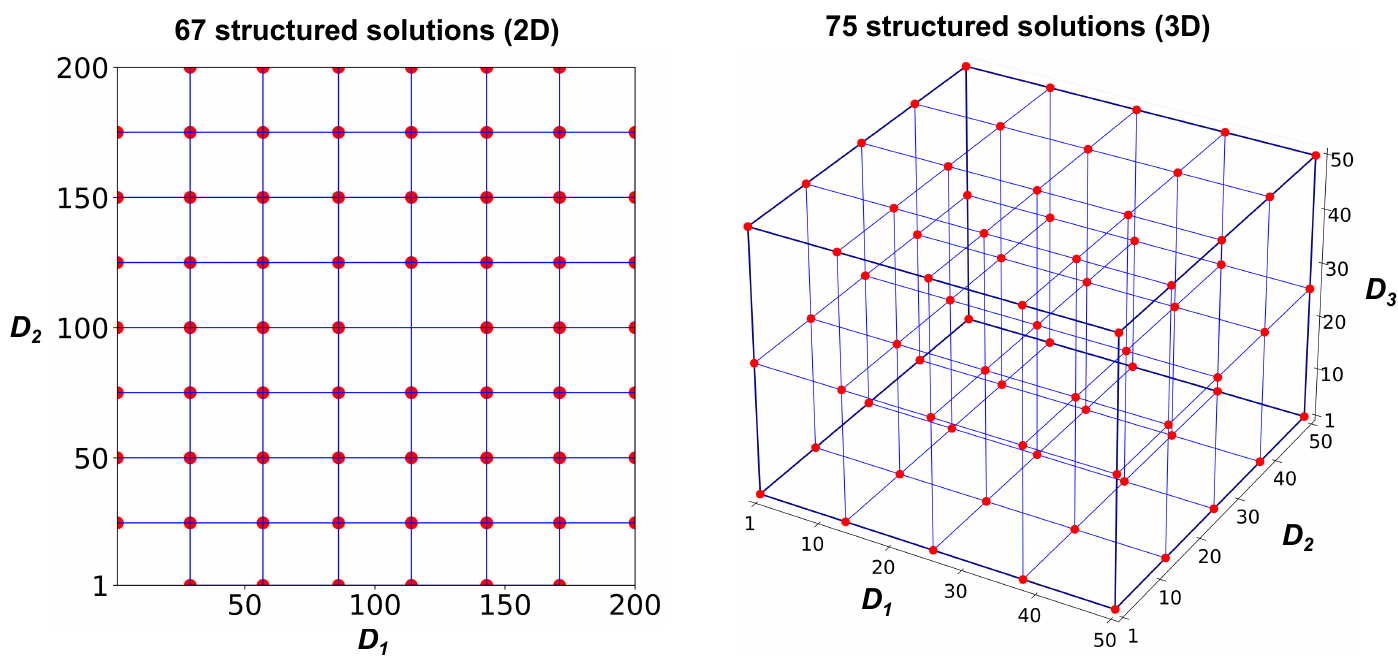}
   \spacingset{1}
  \caption{Structured approach used for generating instance 4 for benchmarking 2D and 3D cases. Out of the total possible 40,000 ($200 \times 200$) solutions for the 2D case and 125,000 ($50 \times 50 \times 50$) solutions for the 3D case, we use the 67 highlighted by red dots (left) and 75 highlighted by red dots (right), respectively. This approach covers the entire solution space in a structured way. In the 2D case, to match the number of iterations by MOBO-$\mathcal{D}^*$, i.e., 67, we skip 5 points, 4 corner points, and 1 in the center, on the $8 \times 9$ grid.} 
   \vspace{-5pt}
  \spacingset{1.5}
    \label{fig:lattice_cube}
\end{figure}

\end{document}